\documentclass{article}

\usepackage[preprint]{tmlr}

\usepackage{amsmath}
\usepackage{amssymb}
\usepackage{amsthm}
\usepackage{graphicx}
\usepackage{booktabs}
\usepackage{array}
\usepackage{adjustbox}

\usepackage{pifont}
\usepackage{tikz}
\usetikzlibrary{positioning,arrows.meta,calc,backgrounds,fit,shapes.geometric}
\definecolor{ink}{HTML}{1F2937}
\definecolor{inFill}{HTML}{EEF2F7}\definecolor{inLine}{HTML}{94A3B8}
\definecolor{pdFill}{HTML}{DBEAFE}\definecolor{pdLine}{HTML}{2563EB}
\definecolor{rsFill}{HTML}{FEF3C7}\definecolor{rsLine}{HTML}{D97706}
\definecolor{tsFill}{HTML}{DCFCE7}\definecolor{tsLine}{HTML}{16A34A}
\definecolor{coFill}{HTML}{EDE9FE}\definecolor{coLine}{HTML}{7C3AED}
\definecolor{fbFill}{HTML}{F1F5F9}\definecolor{fbLine}{HTML}{64748B}
\definecolor{ouFill}{HTML}{E2E8F0}\definecolor{ouLine}{HTML}{334155}
\definecolor{noGood}{HTML}{DC2626}
\definecolor{lab}{HTML}{6B7280}

\usepackage{hyperref}
\hypersetup{colorlinks=true, linkcolor=blue, citecolor=blue, urlcolor=blue,
  pdftitle={NeuroForge: Self-Auditing Neural CFD Surrogates with Calibrated Physics-Residual Trust},
  pdfauthor={Ali Jabbary, Kasra Ghanavati}}

\theoremstyle{plain}
\newtheorem{theorem}{Theorem}
\newtheorem{proposition}{Proposition}

\title{NeuroForge: Self-Auditing Neural CFD Surrogates\\ with Calibrated Physics-Residual Trust}

\author{Ali Jabbary \\
  \addr Department of Mechanical Engineering, Urmia University \\
  \email st\_a.jabbary@urmia.ac.ir
  \AND
  Kasra Ghanavati \\
  \addr School of Computing and Mathematical Sciences, University of Greenwich \\
  \email kg1111r@gre.ac.uk}

\begin{document}

\maketitle

\begin{abstract}
Machine-learning surrogates for computational fluid dynamics (CFD) predict steady
flow fields far faster than classical solvers, but emit a
single field with no built-in way to know whether to trust it---especially out of
distribution. We make the surrogate \emph{audit itself} against the governing physics: we
compute the discretised steady-state Reynolds-averaged (RANS) residual of the prediction
and ask what jobs it can do. Our central finding is a clean \textbf{two-way dissociation}: the physics
residual is a reliable, backbone-robust \textbf{trust signal} (it tells you \emph{where}
the prediction is wrong) but a poor \textbf{correction objective} (it does not tell you
\emph{how} to fix it). \emph{As a trust signal}, the residual's per-case rank correlation
with field error is consistently positive across three architecturally distinct backbones
and a second dataset of laminar bluff bodies; it flags the worst-decile cases with AUROC
$\approx 0.9$, and a distribution-free split-conformal layer attains its target coverage. \emph{As a correction objective}, the
residual fails: reducing it does not reduce field error. The monotone-residual acceptance
gate built on it, by contrast, does help. Separately, a learned deep-equilibrium corrector trained
toward ground truth lowers volume-field error on a state-of-the-art backbone; a controlled
ablation that removes the residual input matches this gain, so the improvement comes from
the learned correction, not from conditioning on the residual. The contribution is a
self-auditing, calibrated trust layer, the residual's two roles (trust signal yes,
correction objective no), and the learned correction it accompanies, demonstrated on a
competitive surrogate and released as the open-source \texttt{neuroforge-cfd} package.

\vspace{1ex}
\noindent\textbf{Keywords:} neural operators, surrogate CFD, physics residuals,
uncertainty quantification, conformal prediction, trust calibration, airfoil
aerodynamics.
\end{abstract}

\section{Introduction}

Classical CFD resolves the governing PDEs by iterating a discretised system until its
residuals fall below a tolerance. It is accurate and trusted, but slow: meshing and a
converged steady-RANS solve over a new geometry can take minutes to hours, throttling
the rapid design-space exploration that early engineering most needs. Neural
surrogates promise the opposite trade---millisecond inference---and a vigorous
literature now maps geometry plus boundary conditions directly to flow fields with
neural operators and attention models.

The problem is \textbf{trust}. A trained surrogate is a one-shot function: it emits a
field with no self-assessment, no convergence signal, and no recovery path when it
extrapolates. On a geometry far from the training set---the common case in design---the
user cannot distinguish a faithful prediction from a confident hallucination.

The governing physics offers an obvious lever. The steady incompressible RANS residual
of any candidate field is computable directly from the field, with no ground truth.
A neural-CFD pipeline could use this residual in three distinct ways: as a
\emph{trust signal} that flags \emph{which} predictions to distrust (a case-level error
proxy that drives a conformal trust layer), as a \emph{correction objective} that a step
should minimise, and as an \emph{acceptance gate} that a step must not worsen. These are
not the same role, and---this is the crux---they do not earn the same verdict. Put plainly, the residual tells you
\emph{where} a prediction is wrong, not \emph{how} to fix it. This paper separates the two
roles and answers each empirically on a SOTA backbone, and---separately---deploys a
learned, supervised corrector whose gain we attribute by ablation to the learned
correction, not to the residual.

\paragraph{Research question.} \emph{In which of its roles---trust signal, correction
objective, and acceptance gate---is the steady-RANS physics residual of a neural-CFD
prediction useful, and does the answer survive on a competitive backbone?}

\paragraph{Thesis (one sentence).} \textbf{\emph{The steady-RANS physics residual is a
reliable, conformally-calibrated, backbone-robust case-level trust signal---it tells you
where a prediction is wrong---even though minimising the residual directly remains a poor
correction objective; the learned correction we deploy reduces error on a SOTA
surrogate, but a controlled ablation attributes that gain to the learned correction, not to
feeding it the residual.}}

\paragraph{Self-auditing, defined.} We call a surrogate \emph{self-auditing} if, from its
own prediction and inputs alone---no ground truth, no reference solver---it supplies three
measurable things: \textbf{(i)} a case-level \emph{trust score} whose ranking of
predictions by true error is validated; \textbf{(ii)} a \emph{calibrated error band} with
a distribution-free guarantee; and \textbf{(iii)} an \emph{accept/reject decision} whose
efficiency is measured against an oracle. The title is a claim, and the paper is its
proof obligation: we instantiate all three components with the physics residual at the
centre---(i) positive residual--error rank correlation across three backbones and two
datasets (\autoref{sec:v2}); (ii) split-conformal coverage at target, stable across 20
calibration re-draws (\autoref{sec:conformal}); (iii) near-oracle triage, recovering
$\approx 91\%$ of the achievable error reduction at a $10\%$ rejection budget
(\autoref{fig:risk_coverage}; the deciding score fuses the residual with the surrogate's
own ensemble spread---neither needs ground truth---and the residual alone still recovers
$\approx 67\%$)---and then quantify the audit's limits: the residual-floor
theorem (\autoref{sec:residual_floor}) characterises exactly what the monitored residual
cannot see, and the failed \emph{objective} role marks what the audit does \emph{not}
license---the audit tells you where to look, not how to fix.

We arrive at this thesis the hard way. We built the full
predict$\rightarrow$verify$\rightarrow$estimate-uncertainty$\rightarrow$correct$\rightarrow$fall-back
pipeline, pre-registered hypotheses, and ran a 3-seed ablation on real AirfRANS plus an
out-of-distribution study, two formal certificates, and---in a second phase---the same
trust layer and corrector on a SOTA Transolver backbone. The trust signal and the
supervised corrector survive on the strong backbone; residual \emph{minimisation} is the
part that disappoints---the iteration sweep drives residual and error apart---and we report
that as a finding rather than hide it. The acceptance gate built on the same residual is a
\emph{separate} mechanism, and measured directly it works (\autoref{sec:iters}). Two honest findings accompany the corrector:
\emph{correction quality is backbone-dependent}---the supervised corrector that helps on
the SOTA backbone is flat on the weak grid backbone---and a controlled ablation shows the
gain is \emph{not} sourced from the residual input (a null corrector with the residual
channels zeroed matches or slightly exceeds it), so we attribute the gain to the learned
correction, not to residual-conditioning. We report both.

\subsection{Contributions}

In priority order:

\begin{enumerate}
\item \textbf{A backbone-agnostic trust layer over a consistently positive residual detector (headline).} The physics residual is
a calibrated, \emph{case-level} error detector across three architecturally-distinct backbones
tested. On the SOTA
Transolver backbone its per-case rank (Spearman) correlation with field error is
$0.611{\pm}0.054$ over five seeds (\autoref{sec:v2}, \autoref{tab:v2}); on a much weaker grid backbone it is
$\approx 0.40$ bare and a learned corrector lifts it to $0.83$ (\autoref{sec:indist-ablation});
and on a param-matched message-passing GNN (MeshGraphNet, bare) it is $0.851{\pm}0.058$
(\texttt{results/mgn/mgn\_results.json}, a high value that partly reflects that model's wide
error spread, which is easy to rank),
so the detector is consistently \emph{positive} across all three, though its \emph{strength}
varies ($\rho$ from $0.40$ to $0.85$; spatial localisation is weaker and
region-scale---patch pooling doubles it to $\rho{\approx}0.32$,
\autoref{sec:method}). Building on it---and depending only on a model's predictions and
$\sigma$, hence backbone-agnostic---a split-conformal trust layer attains
target coverage on the SOTA backbone ($0.902{\pm}0.008$ at the $0.90$ target,
\autoref{sec:v2}); with a deep-ensemble $\sigma$ the band is input-adaptive on that same
near-deterministic model (ECE $0.074$, \autoref{tab:uq}), and on the weak backbone it
holds under distribution shift, including a regime where the bare model's signal collapses
(\autoref{sec:ood}, \autoref{sec:conformal}).
The signal also supports \emph{selective prediction}: it detects worst-decile-error cases
with AUROC $0.87$--$0.91$ on the AirfRANS arms ($0.83$--$0.98$ including the DeepCFD OOD
seeds), and rank-fused with the ensemble $\sigma$ reaches AUROC $0.905$, recovering
$\approx 91\%$ of the oracle's achievable error reduction at a $10\%$ rejection budget
(\autoref{fig:risk_coverage}).
The layer depends only on a model's predictions and uncertainty, so it is backbone-robust.

\item \textbf{Learned correction works on a strong backbone.} A learned DEQ corrector trained
(supervised) toward ground truth reduces volume-field MSE on the SOTA backbone on
\textbf{all five seeds}: $\texttt{mse\_u}$ $0.133\rightarrow0.122$ ($-8\%$),
$\texttt{mse\_v}$ $0.096\rightarrow0.076$ ($-21\%$), $\texttt{mse\_p}$ $644\rightarrow485$
($-25\%$, gate-verified), with internal force-ranking self-consistency held
($\rho_{C_l}=0.999$; against \emph{official} labels drag ranking is recovered at
$\rho_D=0.84$, and a repaired control-volume integrator recovers official \emph{lift}
nearly exactly from the predicted fields, $\rho_L=0.998{\pm}0.001$ with per-seed median
magnitude errors of $3.6$--$3.9\%$, \autoref{sec:v2}). The backbone is a published SOTA architecture (Transolver),
so the corrector improves a competitive surrogate.\footnote{Throughout, ``SOTA'' refers
to Transolver's status as a published state-of-the-art architecture at the time of our
experiments (ICML 2024); an independent 2026 benchmark still highlights Transolver(++)
among the strongest aerodynamic surrogates \citep{scherz2026evaluation}, while newer
architectures have since advanced the AirfRANS accuracy frontier \citep{rigotti2026gist}.
Our claims concern the trust layer and corrector demonstrated \emph{on} this competitive
backbone, not an accuracy-leaderboard position.} The deployed corrector
ingests the residual as an input channel, but a controlled ablation that zeros that input at
train and inference (5 seeds) matches or slightly exceeds the deployed corrector, so we
attribute the gain to the learned correction, not to residual-conditioning (\autoref{sec:v2},
\texttt{results/control/w1\_capture.json}).

\item \textbf{Residual-as-objective fails---but the gate built on it works (a separated
verdict).} Used as a correction \emph{objective}, the residual
does not work. Sweeping the correction iterations \emph{raises} the PDE residual
($0.11\rightarrow0.62$) while \emph{lowering} field error
($\texttt{mse\_u}$ $3.92\rightarrow2.29$): the two move in opposite directions
(\autoref{sec:iters}, \autoref{fig:fig4}). The \emph{acceptance gate} is a distinct
mechanism and earns a distinct verdict: measured over $1200$ gated steps it admits a step on
$99.8\%$ of deployed cases (the full correction on half of them), and the admitted step
lowers true field error on $89.3\%$ of them by a median $5.8\%$ (\autoref{sec:iters},
\texttt{results/control/acceptance\_gate.json}). The monotone-residual guarantee is thus
cheap, provably non-worsening \emph{and} useful---it is residual \emph{minimisation}, not
the gate, that fails. Minimising the residual is not minimising the error---which is
exactly why the deployed corrector supervises toward truth rather than minimising the
residual. The residual-floor theorem locates this failure as a \emph{regime boundary}
rather than a contradiction of prior successful residual-driven correctors: correction
works where the discrete residual is (near-)exact at the truth
\citep{huang2025physicscorrect, learned2023residualcorrection}, and fails where the
monitored residual carries a floor, as here (\autoref{sec:residual_floor}).

\item \textbf{Backbone-dependence, reported as a finding (not hidden).} The same corrector
that delivers $-9$ to $-25\%$ on the SOTA backbone is flat-to-unhelpful on the weak grid
backbone (\autoref{sec:indist-ablation}, \autoref{tab:indist}), where it trades volume
accuracy for surface fidelity and a stronger trust signal. Correction quality is therefore
backbone-dependent; we report both regimes side by side and the SOTA result is the one that
supports the learned-correction claim. For honest context we also report a matched-budget
baseline study (\autoref{sec:baseline}, \autoref{tab:transolver}): the grid backbone trails
Transolver by $4$--$60\times$, while the trust layer and corrector are demonstrated
\emph{on} the competitive Transolver backbone in \autoref{sec:v2}.

\item \textbf{A reproducible, CPU-first open-source package} (\texttt{neuroforge-cfd})
with a frozen I/O contract, interchangeable backbones, a zero-download synthetic
potential-flow data generator, an AirfRANS loader, the ablation/certificate harness,
and the figures---so every claim runs on a laptop and every certificate is reproducible
from a committed script.
\end{enumerate}

We are explicit throughout about what is \textbf{measured} versus \textbf{assumed},
retain the retraction of stale single-seed numbers from earlier drafts
(\autoref{sec:indist-ablation}), and make no out-of-distribution coverage \emph{guarantee};
we make a competitiveness claim only for the Transolver backbone our trust layer runs on
(\autoref{sec:v2}), not for the grid backbone.

\paragraph{How to read this paper.} The argument is long because the controls are the
argument, so it is worth saying where each kind of evidence lives.
\autoref{sec:method} defines the audit and proves its two properties---contraction under
the acceptance gate, and split-conformal coverage---and \autoref{sec:residual_floor}
states the residual-floor theorem that bounds what the audit can detect at all.
\emph{Does it work?} \autoref{sec:v2} (trust signal and learned correction on a
competitive backbone) and \autoref{sec:conformal} (calibrated bands, selective
prediction). \emph{Does it keep working?} \autoref{sec:v2} and \autoref{sec:ood} across
three backbone families, two datasets and a regime shift, with five-seed spreads and
case-level bootstrap intervals throughout, and \autoref{sec:indist-ablation} for the
controls---including the ones that falsified claims we had previously made.
\emph{What does it cost?} \autoref{sec:cost} times every stage of the deployed pipeline.
\emph{Can it be checked?} Every number here maps to a committed script and result file
through \texttt{docs/REPRODUCE.md}, with a manifest recording seeds, environment and
SHA-256 hashes. Readers who want only the central result can read
\autoref{sec:v2}, \autoref{sec:iters} and \autoref{sec:cost}.

\section{Related Work}

\paragraph{Neural operators (FNO / Geo-FNO).} Fourier Neural Operators learn a
resolution-agnostic mapping between function spaces by parameterising a global
convolution in the spectral domain \citep{li2021fno}. Geo-FNO \citep{li2022geofno}
extends this to irregular geometries via a learned deformation to a latent uniform grid
where the FFT is valid. These are strong \emph{one-shot} operators with no
self-assessment; NeuroForge implements both as backbones and adds the trust layer
around them.

\paragraph{Point-cloud and physics-attention operators (DoMINO, Transolver).} DoMINO
\citep{ranade2025domino} is a decomposable multi-scale point-cloud operator for external
aerodynamics on large industrial meshes. Transolver \citep{wu2024transolver} replaces
quadratic attention with attention over learnable \emph{physics slices}, giving linear
cost and strong accuracy on PDE benchmarks; Transolver++ \citep{luo2025transolverpp}
scales it to million-scale meshes. We use Transolver in two ways: as a matched-budget
baseline for our grid backbone (\autoref{sec:baseline}), and as the SOTA backbone our
trust layer and corrector actually run on (\autoref{sec:v2}), where the residual is a
calibrated detector and a supervised learned corrector reduces field MSE.

\paragraph{Learned solver-correctors and the ``fixer'' premise.} Several lines of work
use the discretised residual or a coarse-solution error as a \emph{correction
objective}: learned PDE solvers with convergence guarantees \citep{hsieh2019neuralpde},
deep-equilibrium operators for steady PDEs \citep{marwah2023fnodeq}, iterative refiners
\citep{lippe2023pderefiner}, learned residual-error correctors
\citep{learned2023residualcorrection}, and---most recently---training-free correction
that solves a linearised inverse problem on the PDE residual at each rollout step,
reporting up to $100\times$ error reduction on transient benchmarks
\citep{huang2025physicscorrect}. PINNs \citep{raissi2019pinn} embed the residual
in the training loss. NeuroForge implements a contractive DEQ corrector and a
feed-forward residual-conditioned corrector squarely in this family. \textbf{Our
contribution to this line is to test the residual-as-correction premise head-on and find it
does not hold.} As a correction \emph{objective}---where a step must reduce the
residual---residual minimisation does not track error minimisation on a real RANS benchmark
(the residual rises as error falls). And when we
feed the residual as an input \emph{feature} to a corrector supervised toward ground truth, a
controlled ablation shows the residual input is not the source of the gain: zeroing it at
train and inference (3 seeds) matches or slightly exceeds the deployed corrector
(\autoref{sec:v2}, \texttt{results/control/w1\_capture.json}). The residual's value is as a
\emph{trust signal}---it tells you \emph{where} a prediction is wrong---not as a correction
mechanism telling you \emph{how} to fix it; the learned correction does the fixing.
The prior successes and our negative are consistent, and the residual-floor theorem
(\autoref{sec:residual_floor}) marks the boundary between them: residual-driven correction
works where the discrete residual is (near-)exact at the true solution---resolved
transient PDEs stepped on their training grid \citep{huang2025physicscorrect}, variational
problems with certified residuals \citep{learned2023residualcorrection}---and fails where
the monitored residual carries a nonzero floor at the dataset truth, as for our
under-resolved steady-RANS closure monitor. Our negative is scoped to the latter,
deployment-realistic regime.

\paragraph{Residual-based, data-free error signals.} Closest to our \emph{trust-signal}
half is recent work that reads the PDE residual as a reference-free indicator of prediction
error. \citet{roy2025anchor} use a physics-informed, residual-based error estimator to detect
and correct accumulating error in neural-operator \emph{time-marching}, triggering a classical
solver when the estimator fires; \citet{song2026structureaware} design an epistemic-UQ scheme
whose uncertainty bands are aligned with localised residual structure in operator surrogates.
\citet{gopakumar2025pre} go furthest in the trust direction: they use the PDE residual itself
as a conformal nonconformity score, obtaining data-free coverage guarantees---in
\emph{residual} space, with the transfer to solution space left open as a set-propagation
problem. Our residual-floor theorem (\autoref{sec:residual_floor}) supplies a concrete
mechanism for that gap in the discrete steady-RANS setting: the monitored residual has a
nonzero floor at the reference solution and a kernel of undetectable modes, so residual-space
smallness cannot by itself certify solution-space accuracy. The question we put to the residual---can a quantity computable without the true
solution certify that solution's accuracy?---is the defining question of
a-posteriori error estimation, and the classical answer is instructive: residual
magnitude is not itself an error estimate, and becomes one only when weighted by the
solution of an adjoint problem measuring how strongly each local residual propagates
into the functional of interest \citep{beckerrannacher2001dwr}. The neural analogue
keeps that structure: \citet{hillebrecht2025posteriori} bound the true error of a
PDE-defined PINN by its residual scaled by a stability constant of the underlying
operator. Both results are conditional on a property our monitor does not have---that
the residual vanishes at the exact solution---and the residual-floor theorem is
precisely a statement of how that condition fails for an under-resolved discrete
steady-RANS operator. That is why we use the residual to \emph{rank} cases rather than
to \emph{bound} them, and obtain the bound from conformal calibration instead.
From the theory side,
\citet{mukherjee2026certification} prove that vanishing residuals guarantee solution
convergence only under additional compactness assumptions; our setting is the practical
complement, where the discrete monitored residual provably does \emph{not} vanish at the
dataset truth. We share the premise that the residual is a data-free error signal and do not
claim to originate it. Our distinct contributions are (i)~the head-to-head \emph{dissociation}---the residual is a
good trust signal but a poor correction objective, established on the same steady-state RANS
benchmark; (ii)~the \emph{residual-floor theorem} quantifying the operator-specific detection
limit and the undetectable kernel modes; and (iii)~a split-conformal certificate calibrated on
the \emph{deployed, corrected} field. Our setting is steady-state external aerodynamics with
per-case spatial trust, rather than time-marching error control.

\paragraph{Uncertainty quantification and conformal prediction for PDE surrogates.}
Deep ensembles \citep{lakshminarayanan2017ensembles} and MC-dropout
\citep{gal2016dropout} are standard epistemic estimators; conformal prediction gives
distribution-free coverage and has been applied to operator learning
\citep{ma2024uqno}. Within this journal the same combination is being pursued from both
the PINN and the operator side: \citet{yu2026conformalpinn} conformalise the output of
a physics-informed network, and \citet{garg2025dfuq} give distribution-free bands for
neural operators. Both calibrate the \emph{raw} surrogate; we calibrate the field the
user actually receives---the corrected one---which matters here because the correction
step changes the very error distribution the quantile is fitted to
(\autoref{sec:conformal}). Concurrently with this work, \citet{jia2026multigranularity} develop
multi-granularity conformal calibration (case-level quantile regression plus
residual-normalised point-level bands) for neural-operator automotive surrogates on
DrivAerML; their nonconformity scales are purely data-driven, whereas our trust signal is
the physics residual itself, and we additionally test---and reject---its use as a
correction objective. We do \textbf{not} propose a new UQ method. Our trust layer applies
split-conformal calibration to MC-dropout uncertainty; the new content is the
\emph{characterisation} of when this holds for CFD surrogates---in particular that
coverage survives out-of-distribution empirically via uncertainty inflation, while the
formal guarantee does not transfer once exchangeability fails.

\paragraph{Benchmarks (AirfRANS).} AirfRANS \citep{bonnet2022airfrans} provides
$\sim 1000$ incompressible steady-RANS simulations over NACA airfoils with
\texttt{full}, \texttt{scarce}, \texttt{reynolds}, and \texttt{aoa} splits designed to
probe generalisation. We use \texttt{full} for in-distribution evaluation and the
regime-disjoint \texttt{reynolds}/\texttt{aoa} splits for the out-of-distribution study.

\subsection{Positioning}

The closest prior art treats the physics residual primarily as a \emph{correction
objective} (PINNs, FNO-DEQ, learned correctors) or treats UQ in isolation from physics.
NeuroForge's distinctive question is which of two roles the residual can play, evaluated
head-to-head on the same model and data and then on a SOTA backbone: as a trust
\emph{signal} (detect error, drive conformal calibration) and as a correction
\emph{objective}, with the acceptance \emph{gate} built on it judged separately. Our
answer---signal yes (it tells you \emph{where} the prediction is wrong), objective no, gate
yes---and the resulting calibrated trust layer are the contribution.
Separately, a learned, supervised corrector reduces field error on the strong backbone, but a
controlled ablation attributes that gain to the learned correction rather than to the residual
input. The framework's no-harm/contraction machinery (\autoref{sec:method})
is retained because it is correct and reproducible; the acceptance-gate guarantee in
particular is real and, as we now measure, materially useful---it admits a step on $99.8\%$
of deployed cases and that step lowers field error on $89.3\%$ of them
(\autoref{sec:iters}).
\autoref{tab:positioning} summarises the positioning: each neighbouring work holds one
or two of the audit's components, and none combines a validated physics-residual trust
signal with distribution-free calibration, measured selective prediction, a head-on test
of the residual's objective role, and a theory of the signal's limits.

\begin{table}[t]
\centering
\small
\caption{Positioning against the closest 2024--2026 work. Columns: \emph{Trust} =
physics-residual trust signal validated against true error; \emph{Calib.} =
distribution-free calibrated bands; \emph{Select.} = selective prediction measured
(risk--coverage / AUROC); \emph{Object.} = residual-as-correction-objective tested;
\emph{Theory} = formal theory of the signal's limits; \emph{$\ge$3 bb.} = three or more
backbone families; \emph{$\ge$2 data} = two or more datasets/regimes.
$\bullet$ = demonstrated, $\circ$ = partial, -- = not a goal of / not reported by that
work. In the \emph{Object.} column, PhysicsCorrect and the variational corrector
\emph{succeed} in regimes where the discrete residual is (near-)exact at the truth; we
test the floored steady-RANS regime where it \emph{fails}---the residual-floor theorem
(\autoref{sec:residual_floor}) is the boundary between the two.}
\label{tab:positioning}
\begin{adjustbox}{max width=\textwidth}
\begin{tabular}{l c c c c c c c}
\toprule
 & Trust & Calib. & Select. & Object. & Theory & $\ge$3 bb. & $\ge$2 data \\
\midrule
PRE conformal \citep{gopakumar2025pre} & $\bullet$ & $\bullet$ & -- & -- & -- & $\bullet$ & $\bullet$ \\
Structure-aware UQ \citep{song2026structureaware} & $\circ$ & -- & -- & -- & -- & -- & -- \\
ANCHOR \citep{roy2025anchor} & $\bullet$ & -- & $\circ$ & -- & -- & -- & -- \\
PhysicsCorrect \citep{huang2025physicscorrect} & -- & -- & -- & $\bullet$ & $\circ$ & $\bullet$ & $\bullet$ \\
Variational corrector \citep{learned2023residualcorrection} & -- & -- & -- & $\bullet$ & $\circ$ & -- & -- \\
Conformal DrivAerML \citep{jia2026multigranularity} & -- & $\bullet$ & -- & -- & -- & $\circ$ & -- \\
Error certification \citep{mukherjee2026certification} & -- & -- & -- & -- & $\bullet$ & -- & -- \\
\midrule
\textbf{This work} & $\bullet$ & $\bullet$ & $\bullet$ & $\bullet$ & $\bullet$ & $\bullet$ & $\bullet$ \\
\bottomrule
\end{tabular}
\end{adjustbox}
\end{table}

\section{Method}
\label{sec:method}

\subsection{Pipeline overview}

The NeuroForge framework wraps a one-shot backbone in a verify--estimate--correct--fall-back loop
(\autoref{fig:pipeline}).
Throughout, the physics residual plays \textbf{two distinct roles with two distinct
verdicts} (\autoref{sec:experiments}): (i)~as a \emph{trust signal} it detects error and
drives the conformal layer (reliable, backbone-robust---it tells you \emph{where} the
prediction is wrong); (ii)~as a correction \emph{objective} it fails---though the
\emph{acceptance gate} built on it does work (\autoref{sec:iters}). Separately, a learned, supervised corrector improves a SOTA backbone
(\autoref{sec:v2}); the deployed corrector ingests the residual as an input channel, but a
controlled ablation attributes its gain to the learned correction, not to the residual input.
We describe all components for completeness and flag, at each one, which role it instantiates.

\begin{figure}[t]
\centering
\resizebox{0.92\linewidth}{!}{%
\begin{tikzpicture}[
  >={Stealth[length=2.4mm]},
  font=\footnotesize, line width=0.7pt, node distance=5.5mm and 12mm,
  base/.style={rounded corners=2.5pt, draw, align=center, inner sep=4pt,
               minimum height=9mm, text width=48mm, text=ink},
  stage/.style   ={base, fill=inFill, draw=inLine},
  predict/.style ={base, fill=pdFill, draw=pdLine, line width=0.9pt},
  pivot/.style   ={base, fill=rsFill, draw=rsLine, line width=1.2pt},
  trust/.style   ={base, fill=tsFill, draw=tsLine},
  correct/.style ={base, fill=coFill, draw=coLine, line width=0.9pt},
  fallback/.style={base, fill=fbFill, draw=fbLine, dash pattern=on 3pt off 2pt},
  output/.style  ={base, fill=ouFill, draw=ouLine, line width=0.9pt},
  flow/.style    ={->, line width=1.0pt, draw=ink!75},
  elab/.style    ={font=\scriptsize, text=lab, inner sep=1.5pt, fill=white},
  roletag/.style ={rounded corners=3pt, font=\scriptsize, inner sep=4pt, align=left,
                   line width=0.9pt},
]
\node[stage]   (geo) {\textbf{Geometry \& conditions}\\[1pt]{\scriptsize CAD / STL airfoil $+$ boundary, fluid, domain}};
\node[stage, below=of geo] (enc) {\textbf{Geometry encoding}\\[1pt]{\scriptsize SDF $\cdot$ mask $\cdot$ coords $\cdot$ freestream $\cdot$ $\log_{10}\!\mathrm{Re}$}};
\node[predict, below=of enc] (bb) {\textbf{Neural-operator backbone} $f_\theta$\\[1pt]{\scriptsize transformer / FNO / message-passing}};
\node[pivot, below=9mm of bb] (res) {\textbf{Physics residual} $\mathcal{R}(\hat{y})$\\[1pt]{\scriptsize continuity $\cdot$ $x$-, $y$-momentum $\cdot$ BC violation}};
\node[trust, below=of res] (unc) {\textbf{Uncertainty} $\sigma(\cdot)$\\[1pt]{\scriptsize MC-dropout / deep ensemble}};
\node[trust, below=of unc] (conf) {\textbf{Conformal trust layer} $\Rightarrow$ trust map $T$\\[1pt]{\scriptsize distribution-free coverage guarantee}};
\node[correct, below=9mm of conf] (corr) {\textbf{Learned corrector} $c_\phi \Rightarrow \Delta$\\[1pt]{\scriptsize DEQ fixed point / gated feed-forward}};
\node[fallback, below=of corr] (fb) {\textbf{Uncertainty-gated classical CFD patch}\\[1pt]{\scriptsize only where a region stays low-trust}};
\node[output, below=of fb] (out) {\textbf{Outputs}\\[1pt]{\scriptsize flow field $\cdot$ $C_p$ $\cdot$ forces $\cdot$ uncertainty / residual / trust maps $\cdot$ history}};
\draw[flow] (geo)  -- (enc);
\draw[flow] (enc)  -- (bb)  node[elab, midway, left=1mm] {$x\in\mathbb{R}^{7\times H\times W}$};
\draw[flow] (bb)   -- (res) node[elab, midway, left=1mm] {$\hat{y}\in\mathbb{R}^{4\times H\times W}$};
\draw[flow, draw=tsLine] (res) -- (unc);
\draw[flow, draw=tsLine] (unc) -- (conf);
\draw[flow] (conf) -- (corr);
\draw[flow] (corr) -- (fb);
\draw[flow] (fb)   -- (out);
\begin{scope}[on background layer]
  \node[draw=tsLine, line width=0.9pt, rounded corners=5pt, dash pattern=on 2pt off 2pt,
        fill=tsFill!30, fit=(res)(unc)(conf), inner sep=4.5mm] (trustbox) {};
\end{scope}
\node[roletag, fill=tsFill, draw=tsLine, text=tsLine!55!black, text width=32mm,
      right=8mm of trustbox.east, anchor=west] (roleI)
  {\textbf{Role\,(i): trust \emph{signal}~\textcolor{tsLine}{\ding{51}}}\\[2pt]
   detects \emph{where} the field is wrong; drives calibration --- reliable, backbone-robust ($\S$5)};
\draw[->, tsLine, line width=0.9pt] (roleI.west) -- (roleI.west -| trustbox.east);
\node[roletag, fill=noGood!5, draw=noGood, text=noGood, text width=31mm,
      left=9mm of res, anchor=east, dash pattern=on 3pt off 2pt] (roleII)
  {\textbf{Role\,(ii): correction \emph{objective}}~\textcolor{noGood}{\ding{55}}\\[2pt]
   \emph{fails} ($\S$5.5) --- the residual minimum is not the truth
   (the acceptance \emph{gate} built on it does work)};
\draw[->, noGood, line width=0.9pt, dash pattern=on 3pt off 2pt]
  (res.west) -- (res.west -| roleII.east);
\coordinate (yhat) at ($(bb.south)!0.5!(res.north)$);
\draw[->, coLine, line width=1.1pt, rounded corners=6pt]
  (corr.east) -- ++(46mm,0) |- (yhat);
\node[elab, text=coLine!55!black, anchor=west, text width=23mm]
  at ($(corr.east)+(47mm,9mm)$)
  {\textbf{correction loop}\\[1pt]corrected field $\hat{y}+\Delta$ re-enters under a monotone-residual no-harm test};
\end{tikzpicture}%
}
\caption{\textbf{The NeuroForge pipeline.} A one-shot backbone $f_\theta$ is wrapped in a
verify--estimate--correct--fall-back loop. Colour marks each component's role: the physics
residual (amber) acts as a reliable \emph{trust signal} (role~i, green) but not as a
\emph{correction objective} (role~ii, red); the deployed correction is a \emph{learned}
corrector (violet) applied under a monotone-residual no-harm test. The two roles, and the
attribution of the corrector's gain, are established and scoped in the text
(\autoref{sec:method}, \autoref{sec:experiments}).}
\label{fig:pipeline}
\end{figure}

\subsection{Geometry encoding and governing equations}

A \texttt{FlowCase} (geometry + boundary conditions + fluid + domain) is encoded into a
fixed channel-first stack $x \in \mathbb{R}^{7\times H\times W}$ in the frozen
\texttt{INPUT\_CHANNELS} order $(\texttt{sdf}, \texttt{mask}, x, y, u_{\text{in}},
v_{\text{in}}, \log\text{Re})$: the signed distance to the body surface (negative inside
the solid), a fluid/solid mask, normalised cell coordinates, the freestream velocity
broadcast over the grid, and $\log_{10}\mathrm{Re}$. The backbone outputs
\texttt{OUTPUT\_CHANNELS} $(u, v, p, \nu_t)$, where $p$ is the \textbf{kinematic}
pressure $p/\rho$.

The verifier evaluates the steady, incompressible, 2-D RANS equations in primitive form
on the \textbf{physical (denormalised)} fields, with effective viscosity
$\nu_{\mathrm{eff}} = \nu + \nu_t$, pointwise:
\begin{align}
\text{continuity:}\quad & r_c = \frac{\partial u}{\partial x} + \frac{\partial v}{\partial y}, \\
\text{$x$-momentum:}\quad & r_x = u\,\frac{\partial u}{\partial x} + v\,\frac{\partial u}{\partial y} + \frac{\partial p}{\partial x} - \nu_{\mathrm{eff}}\,\nabla^2 u, \\
\text{$y$-momentum:}\quad & r_y = u\,\frac{\partial v}{\partial x} + v\,\frac{\partial v}{\partial y} + \frac{\partial p}{\partial y} - \nu_{\mathrm{eff}}\,\nabla^2 v.
\end{align}
Because $p$ is kinematic, no explicit density appears. \textbf{The turbulence term is live.} Over the fluid cells of the 200-case AirfRANS test split the eddy viscosity averages $\nu_t = 8.2\times10^{-5}$ against a laminar $\nu = 1.56\times10^{-5}$: $\nu_t\approx5.2\,\nu$, supplying $\approx84\%$ of $\nu_{\mathrm{eff}}$. The monitored operator is therefore a genuine RANS residual rather than a laminar one under another name, and the closure the surrogate predicts is what the audit consumes. The channel is learned, though backbone-dependently: against $\nu_t$'s own fluid-masked variance, $R^2 = 0.996$ on the grid backbone and $0.67$--$0.70$ on MeshGraphNet (\texttt{scripts/check\_nut\_learned.py}). We report $R^2$ rather than raw MSE here deliberately: $\nu_t$ is three to four orders of magnitude smaller than $u$, so a per-channel MSE compared across channels understates its accuracy for purely dimensional reasons. What the operator \emph{does} omit is the spatially varying term $\nabla\nu_t\!\cdot\!\nabla u$, which is one reason the residual floor of \autoref{sec:residual_floor} is nonzero. Derivatives use central finite
differences with one-sided stencils at the borders; residuals are zeroed inside the
solid. A boundary-condition violation map $r_{bc}$ adds a \textbf{no-slip} term
penalising velocity magnitude in the thin fluid band adjacent to the wall (weighted by
$\exp(-|\mathrm{sdf}|/\ell)$, $\ell\approx3$ cells) and a \textbf{far-field} term
penalising deviation from $(u_\infty, v_\infty)$ on the outer border ring. We stress the
standard caveat that drives our whole study: a low residual is \textbf{necessary but not
sufficient} for correctness---a smooth near-freestream field can have near-zero residual
yet be entirely wrong---so the residual is a \emph{consistency monitor}, and whether it
tracks error is the empirical question \autoref{sec:experiments} answers.

\subsection{Trust map and conformal calibration (the surviving contribution)}

\paragraph{Raw trust map.} The combined PDE-residual magnitude
$\rho_{\mathrm{res}} = \sqrt{r_c^2 + r_x^2 + r_y^2 + r_{bc}^2}$ and a per-cell predictive
uncertainty $\sigma$ are each mapped to $[0,1]$ (using an absolute physical reference
scale $s = U_\infty^2/L + U_\infty/L$ when available, else a robust 95th-percentile
normalisation with an absolute floor) and fused,
\begin{equation}
e = \mathrm{clip}\!\big(w_r\, \hat{r} + w_u\, \hat{\sigma},\; 0,\; 1\big), \qquad T = 1 - e,
\end{equation}
with $w_r=0.6$, $w_u=0.4$, giving a traffic-light field (green $e<0.15$, red $e>0.45$,
else yellow). Uncertainty $\sigma$ is the channel-mean standard deviation from a deep
ensemble or MC-dropout.

\paragraph{Case-level vs.\ per-cell (scope of the residual signal).} Our quantitative trust
result is \textbf{case-level}: the field-mean residual ranks which whole predictions to
distrust (per-case Spearman $0.611$, \autoref{sec:v2}). The residual's \emph{per-cell}
spatial rank correlation with error \emph{within} a field is weaker---$0.22{\pm}0.06$
(Spearman) though $0.60$ (Pearson), measured on the deployed corrected field
(\texttt{results/control/percell\_residual\_error.json}). So the residual carries a coarse
magnitude-level spatial signal (high-residual regions tend to be high-error) but is not a
reliable per-cell error \emph{rank}. The spatial signal is, however,
\textbf{scale-dependent}: a systematic post-processing study on the deployed
ensemble-mean fields ($n{=}200$;
\texttt{scripts/percell\_trust\_localization.py},
\texttt{results/control/percell\_localization.json}) shows that the raw cell-level rank
correlation ($0.166{\pm}0.155$ on these fields) roughly \emph{doubles} once the map is
denoised to region scale---$0.323{\pm}0.232$ with $16{\times}16$-cell patch pooling
($+0.157$, improving $93\%$ of cases; Gaussian smoothing at $\sigma{=}4$ cells gives
$+0.135$)---and the pooled patches flag top-decile-error \emph{regions} at AUROC
$0.68$--$0.72$. Dynamic-pressure normalisation does not help ($+0.008$). The honest
scope is therefore: the residual is a \emph{case-level} ranker and a \emph{moderate
region-level} locator at $\sim$8--16-cell granularity, not a per-cell error rank. We use
it to flag \emph{which} predictions to distrust, read the traffic-light map at patch
scale, and leave per-cell claims to the conformal band below, which is the per-cell
coverage \emph{certificate}; it uses the calibrated $\sigma$, not the raw residual.

\paragraph{Split-conformal calibration.} Raw $\sigma$ is uncalibrated, so a threshold on
it carries no guarantee. We add split-conformal calibration (cf.\ \citealp{ma2024uqno}):
on a held-out calibration set we compute per-channel nonconformity scores
$s = |\hat{y} - y| / \sigma$ and take the finite-sample-corrected $(1-\alpha)$ quantile
$q$; the band $q\cdot\sigma$ then satisfies the distribution-free coverage guarantee in
\autoref{prop:coverage}.

\begin{proposition}[Distribution-free coverage under exchangeability]
\label{prop:coverage}
If the calibration scores and a test score are exchangeable, then the split-conformal
band $q\cdot\sigma$ satisfies $\mathbb{P}\big(|\hat{y} - y| \le q\,\sigma\big) \ge
1-\alpha$.
\end{proposition}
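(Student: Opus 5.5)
The argument is the standard split-conformal rank argument, applied to the normalised scores $s = |\hat{y}-y|/\sigma$.

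\emph{Setup.} Let the calibration set have $n$ points with scores $s_1,\dots,s_n$ and let the test point have score $s_{n+1}$. These scores use the same fixed (pre-trained) predictor $\hat{y}$ and uncertainty map $\sigma$. We assume $\sigma>0$, which the robust normalisation with an absolute floor ensures. Define $q$ as the $\lceil (n+1)(1-\alpha)\rceil$-th smallest of $s_1,\dots,s_n$, and set $q=+\infty$ if this index exceeds $n$. This is the finite-sample-corrected quantile referred to in the text.

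\emph{Step 1: reduce the event to a rank event.} Since $\sigma>0$, the event $|\hat{y}-y|\le q\sigma$ at the test point is exactly the event $s_{n+1}\le q$. It is therefore enough to show $\mathbb{P}(s_{n+1}\le q)\ge 1-\alpha$.

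\emph{Step 2: use exchangeability of the ranks.} By exchangeability of $(s_1,\dots,s_{n+1})$, the rank of $s_{n+1}$ among all $n+1$ scores is uniform on $\{1,\dots,n+1\}$ when there are no ties. With ties, we break them uniformly at random. This only lowers the rank of $s_{n+1}$ relative to the event we need, so the bound is unaffected.

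\emph{Step 3: translate the rank into the quantile event.} The event $s_{n+1}\le q$ contains the event that $s_{n+1}$ has rank at most $k=\lceil (n+1)(1-\alpha)\rceil$ among the $n+1$ scores. The key check is as follows. If $s_{n+1}$ is among the $k$ smallest overall, then at most $k-1$ calibration scores lie strictly below it. Hence the $k$-th smallest calibration score is at least $s_{n+1}$. This gives
\[
\mathbb{P}(s_{n+1}\le q)\;\ge\; \frac{k}{n+1}\;\ge\; 1-\alpha .
\]

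\emph{Scope of the guarantee.} Marginalising over the calibration draw gives the claimed marginal guarantee. The statement is per channel and per scored unit: whatever unit the scores are exchangeable over (cases, or cells pooled consistently) is the unit the guarantee covers.

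\emph{Main obstacle.} The main obstacle is not the algebra but making the exchangeability hypothesis hold for the objects actually scored, which is why the proposition assumes it rather than derives it. I would state explicitly two conditions. First, $\hat{y}$ and $\sigma$, including the corrector and the gate, must be fixed independently of the calibration and test data. Second, when scores are pooled over cells, coverage is understood for a uniformly drawn cell of an exchangeable case; within-field cells are not independent, but exchangeability at the case level suffices for this marginal statement.

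Ties and the $q=\infty$ edge case are routine: the inequality is only made easier by both.
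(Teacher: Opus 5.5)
Your Steps 1--3 are the standard split-conformal rank argument. The paper gives no separate proof and simply invokes the usual split-conformal guarantee, so your route is the one it relies on. Your steps are correct for the proposition as stated: the randomly tie-broken rank of $s_{n+1}$ is uniform on $\{1,\dots,n+1\}$, rank at most $k$ forces $s_{n+1}\le s_{(k)}=q$, and $k/(n+1)\ge 1-\alpha$. One minor point: the ``absolute floor'' in the paper belongs to the $[0,1]$ normalisation used for the trust map, not to the $\sigma$ in the score $|\hat y-y|/\sigma$. So $\sigma>0$ should simply be assumed.

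What is wrong is the claim in your scope discussion that, for scores pooled over cells, exchangeability at the case level suffices. This does not affect the proposition, whose hypothesis is exchangeability of the scores themselves. As a statement about the paper's pooled certificate, however, it is false. A uniformly drawn cell of a fresh test case is not exchangeable with the pooled calibration cells, because those cells come in correlated blocks and the ``$+1$'' in the correction must account for a whole case, not one cell.

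\emph{Counterexample.} Take i.i.d.\ cases in which all $m$ cells of a case share one continuous score. With $n=3$, $m=10$, $\alpha=0.1$, your rule takes the $28$-th smallest ($\lceil 31\cdot 0.9\rceil=28$) of the $30$ pooled scores. That is the largest of the three calibration case values, so coverage is $3/4<0.9$. With equal cell counts the shortfall is at most $(1-\alpha)/(n+1)$. With unequal fluid-cell counts it can be far larger: if half the cases have many zero-score cells and half have a single cell of score $1$, the pooled quantile is $0$ and coverage is about $1/2$.

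\emph{Repair.} Run your rank argument over cases rather than cells.
\begin{itemize}
\item Weight each calibration cell of case $j$ by $1/\big((n+1)m_j\big)$.
\item Place the test case's mass $1/(n+1)$ at $+\infty$.
\item Take the $(1-\alpha)$ quantile of that distribution. For equal $m$, this is the $\lceil (n+1)m(1-\alpha)\rceil$-th smallest of the $nm$ pooled scores.
\end{itemize}
Given the unordered multiset of cases, the test cell's score has exactly the case-weighted law over all $n+1$ cases. The $(1-\alpha)$ quantile of that law is dominated by $q$ whichever case is the test one, so coverage $\ge 1-\alpha$ follows. Otherwise, keep exchangeability of the pooled scores as an explicit hypothesis, as the proposition does.
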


This converts the trust threshold from a hand-picked constant into a band with a known
in-distribution coverage level. \autoref{sec:conformal} reports the measured coverage,
the coverage-vs-$\alpha$ sweep, and the out-of-distribution behaviour, with the
exchangeability caveat made explicit.

\subsection{The correction loop: supervised corrector, residual objective, and gate}
\label{sec:correction}

The framework offers two correction operators and a backtracking acceptance test, and the
distinction between them \emph{is} the intellectual core of the paper. The DEQ corrector is
trained by supervised regression toward ground truth and applied directly without any
acceptance test; it ingests the residual as an input channel (its deployed architecture,
below), but its gain is attributable to the learned correction rather than to that input
(the W1 ablation, \autoref{sec:v2}). This is the path that helps on a SOTA backbone
(\autoref{sec:v2}). The backtracking acceptance test uses the residual as a \emph{gate}---a
step is admitted only if it lowers the residual norm; measured, this path works, while
residual \emph{minimisation} as an objective is the one that fails (\autoref{sec:iters}). We describe both precisely so the contrast is
interpretable.

\paragraph{Feed-forward corrector + backtracking acceptance test.} A small residual CNN
$c_\phi$ predicts an additive correction conditioned on the current field, its current
3-channel residual map, and the geometry: $\Delta_k = c_\phi(y_k, R(y_k), x)$. A
candidate $y_{k+1} = y_k + s\,\Delta_k$ is accepted only if it does not increase the
residual norm $N(y) = \sqrt{\overline{r_c^2 + r_x^2 + r_y^2}}$:
\begin{equation}
N(y_k + s\,\Delta_k) \le N(y_k) + \varepsilon ,
\end{equation}
with $s$ halved up to four times on failure. By construction
$N(y_0) \ge N(y_1) \ge \dots \ge N(y_K)$---the residual norm is monotone non-increasing
across accepted steps. This is the convergence discipline of a classical solver. Its
weakness is exactly the one we exploit as a finding: the test is satisfiable by the
identity (zero step), so a corrector that cannot reduce the residual is simply rejected.
Empirically (\autoref{sec:iters}) the test rarely refuses outright: backtracking finds an
admissible damped step on $99.8\%$ of deployed cases, and that step lowers true error. It
throttles rather than blocks---even though reducing the field error and reducing the PDE
residual are not the same objective.

\paragraph{Contractive DEQ corrector.} To give the loop a genuine convergence guarantee
in its \emph{own} iteration variable, the correction $\delta$ is defined as the fixed
point of a learned operator
\begin{equation}
\delta^{*} = T_\theta(\delta^{*}; c), \quad T_\theta(\delta; c) = \kappa \cdot g_\theta([\delta, c]), \quad c = [\hat{y}, r(\hat{y}), \text{geom}],
\end{equation}
where $g_\theta$ is a CNN whose layers are spectrally normalised (each $\le 1$-Lipschitz)
and $\kappa < 1$.

\begin{proposition}[Banach contraction of the corrector]
\label{prop:contraction}
With $g_\theta$ each-layer $1$-Lipschitz and $\kappa < 1$, $T_\theta$ is a
$\kappa$-contraction in $\delta$. By the Banach fixed-point theorem the equilibrium
$\delta^{*}$ exists, is unique, and the iteration converges geometrically,
$\|\delta_k - \delta^{*}\| \le \kappa^k \|\delta_0 - \delta^{*}\|$.
\end{proposition}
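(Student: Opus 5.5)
The plan is to prove the Lipschitz bound for $T_\theta$ in $\delta$ first, and then invoke Banach's fixed-point theorem on the full space of corrections. The contraction is taken with $c$ held fixed; in deployment $c$ is frozen for the duration of the inner iteration.

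\textbf{Step 1: $g_\theta$ is $1$-Lipschitz.} I would write $g_\theta = \ell_L \circ \phi_{L-1} \circ \ell_{L-1} \circ \dots \circ \phi_1 \circ \ell_1$. Each $\ell_i$ is a spectrally normalised convolution with operator norm at most $1$. Each $\phi_i$ is a componentwise activation; here I assume it is $1$-Lipschitz, which holds for ReLU, leaky-ReLU and $\tanh$, and I would state this assumption explicitly as part of ``each-layer $1$-Lipschitz''. Lipschitz constants multiply under composition, so $\mathrm{Lip}(g_\theta)\le 1$.

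\textbf{Step 2: $T_\theta$ is a $\kappa$-contraction in $\delta$.} Concatenation with a fixed $c$ is an isometric embedding in $\delta$, since $\|[\delta,c]-[\delta',c]\| = \|\delta-\delta'\|$ in the Euclidean/Frobenius norm. Combining this with Step 1 gives
\[
\|T_\theta(\delta;c)-T_\theta(\delta';c)\| \le \kappa\,\|\delta-\delta'\| .
\]

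\textbf{Step 3: Banach.} The space $\mathbb{R}^{4\times H\times W}$ is finite-dimensional and hence complete. By Banach's theorem the fixed point $\delta^*$ exists and is unique. For any starting $\delta_0$, we have $\|\delta_{k}-\delta^*\| = \|T\delta_{k-1}-T\delta^*\| \le \kappa\|\delta_{k-1}-\delta^*\|$, and induction on $k$ gives the geometric rate stated in the proposition.

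\textbf{Main obstacle.} The delicate point is Step 1. Practical spectral normalisation of a convolution, via power iteration on the reshaped kernel, only approximately bounds the true operator norm of the convolution, and padding or stride can change that norm. To handle this I would either define ``spectrally normalised'' as a bound on the exact operator norm of the convolution, or absorb a known overshoot factor $\gamma$ into the requirement $\kappa\gamma<1$. I would also have to check that residual connections or normalisation layers inside $g_\theta$ do not break the $1$-Lipschitz property, or else exclude them.
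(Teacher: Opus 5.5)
Your proposal is correct and follows the same route as the paper, which gives no separate proof and instead compresses the argument into the statement itself: layerwise $1$-Lipschitz composition, scaling by $\kappa<1$ with $c$ fixed, then Banach on the correction space. Your isometry step for the concatenation $[\delta,c]$, the completeness remark, and your caveats about activations, residual connections and approximate spectral normalisation make explicit what the paper leaves as assumptions.
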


The construction underlying \autoref{prop:contraction} is a Deep-Equilibrium model
\citep{bai2019deq} with the Lipschitz constant controlled by spectral normalisation
\citep{winston2020monotone}, trained with Jacobian-Free Backpropagation
\citep{fung2022jfb}. Crucially, $g_\theta$ is trained on
\textbf{data} (it targets the correction toward truth, with the residual as an input
\emph{feature}) rather than by minimising the residual; at inference the converged
$\delta^{*}$ is applied directly, \textbf{without} the backtracking acceptance test. We
verify this contraction empirically (\autoref{sec:conformal}: measured factor
$0.78 < \kappa = 0.9$). Two scopes must be kept distinct, and \autoref{sec:iters} leans
on the distinction: the DEQ contraction is in the \emph{correction variable} $\delta$,
whereas the \emph{PDE residual of the output field} is a different quantity that, as we
show, can rise even as $\delta$ converges and the field error falls.

The feed-forward corrector is trained (backbone frozen) so that
$c_\phi(\hat{y}_{\mathrm{norm}}, R_{\mathrm{norm}}, x) \approx y^{\star}_{\mathrm{norm}} -
\hat{y}_{\mathrm{norm}}$, with the final convolution initialised near zero for a stable
first iteration.

\subsection{Uncertainty-gated classical fallback}

After the loop, if the maximum uncertainty exceeds a threshold, the low-trust fluid
region is handed to a \texttt{ClassicalFallback}. In the present release this is an
\textbf{interface with a \texttt{stub} backend} that reports what would run (and the
region size) without invoking an external solver; the \texttt{openfoam}/\texttt{su2}
backends raise \texttt{NotImplementedError} with setup guidance. This keeps the package
importable and runnable with nothing installed while fixing the integration seam; it is
not part of any quantitative claim.

\section{Implementation}

NeuroForge is an open-source, \textbf{CPU-first}, pure-Python package
(\texttt{neuroforge-cfd}, Python $\ge 3.10$, NumPy/SciPy/PyTorch/Matplotlib). Importing
the package caps BLAS/OMP/MKL thread counts to one by default to avoid catastrophic
oversubscription on low-core machines, and does no heavy work. The codebase is organised
around a \textbf{frozen I/O contract} in \texttt{core/}.

\paragraph{Module map.} \texttt{core/} holds the frozen data contracts
(\texttt{FlowCase}, \texttt{FlowField}, \texttt{Diagnostics}, \texttt{SolveResult}) and
\texttt{Config}. \texttt{geometry/} builds the 7-channel input (NACA generation,
SDF/mask rasterisation, \texttt{encode\_case}). \texttt{data/} has the synthetic
generator, the AirfRANS loader, the rasteriser, and the \texttt{datamodule}
(\texttt{Normalizer}/loaders). \texttt{models/} provides \texttt{FNO2d}, \texttt{GeoFNO},
a Transolver-style \texttt{PhysicsTransformer}, \texttt{UNet}/\texttt{DeepONet}
baselines, \texttt{LocalCorrectionNet}, the \texttt{DEQCorrector}, and the
\texttt{DeepEnsemble}/\texttt{MCDropoutUQ} wrappers, all in a string-keyed registry.
\texttt{physics/} has the differential operators, the \texttt{PhysicsChecker},
\texttt{trust\_map}, force/Cp/error metrics, the conformal calibration, and the
differentiable \texttt{physics\_residual\_torch}. \texttt{solver/} has
\texttt{Predictor}, \texttt{NeuroForgeEngine}, \texttt{neural\_residual\_iteration}, and
\texttt{ClassicalFallback}. \texttt{train/}, \texttt{viz/}, \texttt{cli.py},
\texttt{app/} provide training, plotting/report, the CLI, and a Streamlit UI.

\paragraph{Fixed I/O contract.} Inputs are always the 7 channels
$(\texttt{sdf}, \texttt{mask}, x, y, u_{\text{in}}, v_{\text{in}}, \log\text{Re})$;
outputs always the 4 channels $(u, v, p, \nu_t)$. Fields are
$(\texttt{ny}, \texttt{nx})$ \texttt{float32}; network tensors channel-first
$(B, C, \texttt{ny}, \texttt{nx})$. Pressure is kinematic; residuals are computed on
denormalised fields with $\nu_{\mathrm{eff}} = \nu + \nu_t$. Fixing this contract is what
makes the backbones interchangeable and the trust layer backbone-robust.

\paragraph{Backbones.} \texttt{FNO2d} is a faithful spectral FNO; \texttt{GeoFNO} adds a
geometry-gated conditioning of the lifted features; \texttt{PhysicsTransformer}
implements Transolver-style physics-slice attention (linear in grid points);
\texttt{UNet}/\texttt{DeepONet} are baselines.

\paragraph{Training loss.} The \texttt{CompositeLoss} sums a masked data MSE over fluid
cells, a differentiable physics-residual term on the denormalised fields, and a no-slip
BC term. The corrector is trained separately with the backbone frozen
(\autoref{sec:correction}).

\paragraph{Synthetic potential-flow generator (zero-download reproducibility).}
\texttt{SyntheticRANS} superposes a Hess--Smith source-panel potential core (with a
Kutta-enforcing bound vortex), an algebraic near-wall no-slip ramp and Gaussian wake
deficit, kinematic Bernoulli pressure, and a mixing-length $\nu_t$, with desingularised
kernels and light smoothing. The fields are physically plausible and
continuity-respecting but \textbf{analytic}---a smoke-test/reproducibility substrate, not
solver ground truth. \textbf{No quantitative claim in this paper rests on synthetic
data}; all reported numbers (\autoref{sec:experiments}) are on real AirfRANS.

\paragraph{AirfRANS loader.} \texttt{load\_airfrans} reads each simulation's point cloud,
reconstructs the airfoil loop, rasterises the targets onto a structured crop, and returns
$(\texttt{FlowCase}, \texttt{FlowField})$ pairs for all four splits.

\paragraph{Use of AI-assisted development tools.} Substantial portions of the research
software (model implementations, training/evaluation harnesses, and analysis scripts) were
developed with the assistance of a generative-AI coding tool (Claude, Anthropic; Claude
Opus~4.8 and Claude Fable~5 models, used through the Claude Code environment), operating
under the authors' direction. The authors specified, reviewed, and validated all code; every
reported number is produced by a committed script in the public repository and verified
against an artifact manifest of SHA-256 hashes (\texttt{results/MANIFEST.json}), and the
full test suite accompanies the release. The scientific questions, experimental design,
pre-registered hypotheses, and interpretation of results are the authors' own.

\section{Experiments}
\label{sec:experiments}

All quantitative results are on \textbf{real AirfRANS}. Unless noted, the in-distribution
and out-of-distribution ablations use the pre-registered protocol
(\texttt{benchmarks/ablation.py}, \texttt{docs/EXPERIMENTS.md}) over \textbf{3 seeds
(0, 1, 2)}, reported as mean $\pm$ std (population std, \texttt{ddof=0}; sample std at
$n=3$ widens bars by $\approx 1.22\times$). Metrics follow the AirfRANS community protocol
(\texttt{evaluate\_cases}): per-channel volume MSE (lower is better, well-conditioned),
surface-pressure MSE on the body, Spearman rank correlation of the force coefficients
$\rho_{C_l}, \rho_{C_d}$ (closer to 1 is better---what early-design ranking needs; throughout
the tables these correlate forces integrated from the predicted vs.\ the ground-truth field
through our grid surface-integrator---a \emph{field-to-field self-consistency} valid for the
relative arm/condition comparisons, not agreement with AirfRANS's official force labels;
see \autoref{sec:v2}), and
\texttt{residual\_error\_spearman} ($> 0$ means a low residual tracks low error). The
$\nu_t$ channel is omitted from the tables because its raw MSE is not comparable to the
other channels ($\nu_t$ is three to four orders of magnitude smaller than $u$); measured
against its own fluid-masked variance it is learned, backbone-dependently ($R^2 = 0.996$
grid backbone, $0.67$--$0.70$ MeshGraphNet; \texttt{scripts/check\_nut\_learned.py}, and see
\autoref{sec:limitations}).

With $n=3$ we report \textbf{per-seed sign-consistency} plus effect-size magnitude rather
than p-values, which are meaningless at $n=3$. An effect called ``robust'' below is
sign-consistent across all 3 seeds with a large effect size and (where stated)
non-overlapping per-seed distributions.

\subsection{Setup}

The backbone is an FNO trained on AirfRANS \texttt{full} (800 train / 200 test, 80
epochs). Four arms: \texttt{backbone}, \texttt{backbone (no physics loss)},
\texttt{backbone + local corrector} (feed-forward, with the acceptance test), and
\texttt{backbone + DEQ corrector}. The certificate runs (\autoref{sec:conformal}) use a
dropout-enabled FNO (width 48, 4 layers, 20 modes, dropout 0.05) with a DEQ corrector
($\kappa = 0.9$, damping 0.5), trained $40 + 15$ epochs at resolution 128. The v2 phase
(\autoref{sec:v2}) re-runs the trust layer and DEQ corrector on a SOTA Transolver backbone
under the same protocol; \autoref{sec:indist-ablation}--\ref{sec:conformal} are the
weak-grid-backbone runs that establish the backbone-dependence finding.

\subsection{Backbone-agnostic trust + learned correction on a SOTA backbone (v2)}
\label{sec:v2}

We re-run the trust layer and the DEQ corrector on a \textbf{SOTA Transolver backbone}
($7.35$M parameters, 80 epochs, AirfRANS \texttt{full}, 5 seeds; corrector trained $20$
epochs with the backbone frozen), the strongest model in our study. The protocol and
scoring are identical to \autoref{sec:indist-ablation}; the sources are
\texttt{results/v2/v2\_results.json} (original 3-seed study) and
\texttt{results/control/w1\_capture.json} (identical recipe, extended to five seeds; the
extension reproduces the 3-seed corrector deltas and $\rho_{C_l}$ within pre-registered
tolerances). \autoref{tab:v2} reports the backbone alone and with
the DEQ correction loop over all five seeds.

\begin{table}[t]
\centering
\small
\caption{SOTA Transolver backbone ($7.35$M params, 80 ep, AirfRANS \texttt{full}), alone
vs.\ with the DEQ correction loop. 5 seeds, mean $\pm$ std (population std,
\texttt{ddof=0}). Lower MSE is better;
$\rho$ closer to 1 is better;
resid$\leftrightarrow$err $\rho > 0$ supports the trust signal; conformal coverage targets
$0.90$. Source: \texttt{results/control/w1\_capture.json} (5-seed extension of
\texttt{results/v2/v2\_results.json}). See \autoref{sec:v2}.}
\label{tab:v2}
\resizebox{\textwidth}{!}{%
\begin{tabular}{l r r r r r r}
\toprule
metric & $\texttt{mse\_u}$ & $\texttt{mse\_v}$ & $\texttt{mse\_p}$ & $C_l$ rel.\ err & $C_d$ rel.\ err & resid$\leftrightarrow$err $\rho$ \\
\midrule
backbone & $0.133{\pm}0.017$ & $0.096{\pm}0.006$ & $644{\pm}63$ & $5.7\%$ & $7.6\%$ & $0.611{\pm}0.054$ \\
\ + DEQ loop & $0.122{\pm}0.015$ & $0.076{\pm}0.006$ & $485{\pm}39$ & $5.5\%$ & $6.8\%$ & $0.612{\pm}0.046$ \\
$\Delta$ & $-8\%$ & $-21\%$ & $-25\%$ & $-0.2$pp & $-0.8$pp & $+0.002$ \\
\bottomrule
\end{tabular}%
}
\end{table}

\paragraph{Learned correction works on a strong backbone.} The DEQ corrector reduces volume
field MSE on \textbf{all five seeds}: $\texttt{mse\_u}$ $-8\%$ ($0.133\rightarrow0.122$),
$\texttt{mse\_v}$ $-21\%$ ($0.096\rightarrow0.076$), $\texttt{mse\_p}$ $-25\%$
($644\rightarrow485$). The per-seed reductions are sign-consistent down on all five seeds
for every volume channel. \textbf{Surface-pressure} MSE is the one channel the corrector does
\emph{not} improve on this strong backbone: essentially flat at $9843\rightarrow9899$
($+0.6\%$; per-seed $-1/-1/+2/+2/+1\%$), which we report for completeness rather than omit. Internal
force-ranking consistency is held at $\rho_{C_l}=0.999$, $\rho_{C_d}$ flat at
$0.996$, with relative force errors $C_l$ $5.7\%\rightarrow5.5\%$,
$C_d$ $7.6\%\rightarrow6.8\%$. This is the corrector trained by
supervised regression toward ground truth and applied without the acceptance test
(\autoref{sec:correction}).

\textbf{What these force correlations measure---and what they do not.} Our
$\rho_{C_l}/\rho_{C_d}$ (here and in every table) correlate the force integrated from the
\emph{predicted} field against the force integrated from the \emph{ground-truth} field through
the same grid surface-integrator: a field-to-field consistency, valid for the relative
arm/condition comparisons we draw from it, but \emph{not} agreement with AirfRANS's official
OpenFOAM force labels (which our pipeline never loads). Recomputed against those official labels
(\texttt{scripts/recompute\_force\_vs\_official.py}; 200 cases, 3 seeds;
\texttt{results/control/force\_vs\_official.json}), the deployed backbone recovers force
\emph{ranking}---$\rho_D=0.84{\pm}0.01$ (drag), $\rho_L=0.88$ (lift)---but not magnitude:
absolute drag is biased ($\approx 11\times$ mean relative error) by the near-field
integrator's $1.5$-cell surface-sampling offset, and the same integrator on \emph{perfect}
fields also yields $\rho_D=0.84$, so the model saturates the measurement ceiling rather than
the prediction being the limit.

\textbf{Repairing the measurement.} We then repaired what is repairable
(\texttt{scripts/design\_force\_integrator.py},
\texttt{scripts/recompute\_force\_multischeme.py};
\texttt{results/control/force\_vs\_official\_multischeme.json}). A wall-extrapolated
pressure scheme cuts the ground-truth-field lift magnitude error from $16\%$ to $4.6\%$
(median), and a \emph{control-volume} (far-field momentum-balance) integrator over the
outer grid ring recovers official lift from ground-truth fields almost exactly
($\rho_L=1.00$, median error $<1\%$) and---the deployment-relevant number---from
\emph{predicted} fields at $\rho_L = 0.998{\pm}0.001$ with per-seed median magnitude
errors of $3.6$--$3.9\%$: engineering-grade lift from the surrogate. Drag
decomposes diagnostically: on ground-truth fields the control-volume estimator reduces the
median drag error from $3.3\times$ to $0.23\times$ (the \emph{measurement} is fixable), but
on predicted fields it reaches only $\approx 2.7\times$ with a degraded ranking, so the
remaining drag error is attributable to the \emph{model's} far-field/wake accuracy, not the
integrator; drag \emph{ranking} is still best served by the near-field scheme
($\rho_D=0.84$). We therefore report lift as quantitatively recovered, read
$\rho_{C_l}/\rho_{C_d}$ in the tables as internal consistency, and still make no claim of
accurate absolute \emph{drag} or of a cross-method force-ranking win (published methods use
a different force-integration convention). The backbone is itself a published SOTA architecture
(Transolver, \citealt{wu2024transolver}), so the trust layer and corrector are demonstrated
\emph{on} a competitive surrogate---not only on the weak grid backbone.

\paragraph{The gain comes from the learned correction, not from feeding it the residual
(W1 control).} The corrector improves the field by $-8/-21/-25\%$ (gate-verified).
To isolate \emph{why}, we train the same DEQ corrector two ways under
identical architecture, data, optimiser, schedule and seeds, differing only in whether the
3 residual input channels carry the real residual (WITH) or are permanently zeroed at both
train and inference (NULL). Across 5 seeds, the null corrector \textbf{matches or slightly
exceeds} the deployed one: the per-seed paired difference $d = \text{gain}_{\text{WITH}} -
\text{gain}_{\text{NULL}}$ is negative on all 5 seeds for both $\texttt{mse\_u}$ and
$\texttt{mse\_speed}$ (mean $d \approx -0.005$ on a $\approx 0.12$ base, $\approx 4\%$; WITH
beats NULL on $0/5$ seeds; \texttt{results/control/w1\_capture.json}). The residual input is
therefore \emph{not} the source of the corrector's field-MSE improvement; the improvement is
attributable to the learned correction architecture, not to residual-conditioning. This is
consistent with the residual's role elsewhere in the paper: it tells you \emph{where} a
prediction is wrong (a trust signal), not \emph{how} to fix it. Since NULL and WITH are
statistically tied (NULL marginally ahead, $0/5$ seeds for WITH), the residual input is not
deployed \emph{for accuracy}; we retain it in the as-trained, residual-fed configuration only
because the residual is already computed for the trust layer and the iteration's acceptance
test, so conditioning the corrector on it costs nothing extra---and we note that a
residual-free corrector is an equally valid deployment. We report this as a sharpening of
\emph{attribution}, not a change to the result.

\paragraph{The trust signal is backbone-robust.} On this SOTA backbone the bare
residual--error correlation is already strong ($0.611{\pm}0.054$ over five seeds---per-seed
$0.611/0.652/0.613/0.667/0.511$, positive on every seed; the wider $n{=}5$ spread is the
honest error bar), vs $\approx 0.40$ for the
bare weak backbone in \autoref{tab:indist}; the corrector barely moves it
($0.612{\pm}0.046$), because the strong backbone's residual is already a good detector
without correction. The lift-by-corrector effect we report in \autoref{sec:indist-ablation}
is a property of the \emph{weak} backbone; on the strong backbone the signal is strong from
the start. We add a third, architecturally-distinct data point: a message-passing GNN
(MeshGraphNet, param-matched at $7.35$M, 3 seeds, bare backbone), where the bare
case-level residual--error Spearman correlation is $0.851{\pm}0.058$
(\texttt{results/mgn/mgn\_results.json}). Across three fundamentally different architectures---a
spectral grid Geo-FNO ($\approx 0.40$ bare), an attention/point-cloud Transolver
($0.611{\pm}0.054$ bare), and a message-passing GNN MeshGraphNet ($0.851{\pm}0.058$ bare)---the
correlation is consistently \emph{positive}, though variable in strength (a $0.40$--$0.85$
range). This is precisely what ``backbone-robust'' means here: the detector \emph{works} on
every architecture, not that its strength is uniform across them. (The high MeshGraphNet value
partly reflects that model's wide error spread---large, dispersed errors are easier to rank by
residual.) MeshGraphNet is run only as a bare backbone (no corrector and
no conformal layer), so it speaks only to the trust-signal claim. Either way the residual is a
reliable detector.

\paragraph{Conformal coverage holds under both uncertainty estimators.}
Split-conformal calibration of the pressure channel at $\alpha=0.1$ attains target coverage
on this SOTA backbone with \emph{either} uncertainty estimator: $0.902{\pm}0.008$ with
MC-dropout (3 seeds; \texttt{results/v2/v2\_results.json}) and $0.915$ with a deep ensemble
(single run; \texttt{results/uq\_ensemble/uq\_results.json}), both at the $0.90$ target.
Coverage was never the difficulty---\textbf{adaptivity} was. With MC-dropout this
near-deterministic model has near-degenerate $\sigma$ at dropout $0.05$, so coverage is
delivered by a \textbf{near-constant} band: the conformal quantile $q$ is enormous
($\approx 1.5$--$1.7\times 10^{9}$) and the reliability ECE is $\approx 0.31$.

\paragraph{A deep ensemble recovers an adaptive band on the SOTA model.}
Replacing MC-dropout with a deep ensemble of $M{=}5$ independently-trained Transolver
members---using the per-cell standard deviation across members as $\sigma$---converts the
near-constant band into a genuinely input-adaptive one while preserving coverage
(\autoref{tab:uq}). On the same $100/100$ calibration/test split, the conformal quantile
drops from $\approx 10^{9}$ to $q=2.35$, the reliability ECE improves from $\approx 0.31$ to
$\mathbf{0.074}$, and coverage holds at $0.915$ (target $0.90$, consistent with the
$\approx{\pm}0.03$ binomial spread at $n_{\text{test}}=100$). This is a single ensemble
training run, not a 3-seed average; the calibration/test \emph{split}, however, is not the
fragile part---re-drawing it $20$ times leaves coverage at $0.899{\pm}0.009$ with
$q=2.26{\pm}0.05$ on pressure (split-resampling study, \autoref{sec:conformal}), the
published split sitting at the upper end of that spread. The ensemble mean is also more accurate than any single member
($\rho_{C_l}=0.9995$, $\rho_{C_d}=0.998$, $C_l/C_d$ relative force error $4.4\%/5.7\%$;
\texttt{results/uq\_ensemble/uq\_results.json}). The practical recipe follows: MC-dropout
suffices when a constant-width certificate is acceptable, and a deep ensemble is what makes
the band track the error on a near-deterministic SOTA surrogate.

\paragraph{Corrector vs.\ ensemble (cost, stated plainly).} The ensemble mean
($\texttt{mse\_u/v/p}=0.077/0.058/440$) is in fact \emph{more accurate} than the DEQ-corrected
single backbone ($0.116/0.079/471$), so we do not claim the corrector is the best available
accuracy mechanism. The two answer different questions on a cost axis: the deep ensemble is the
better \emph{accuracy} play when $5\times$ the training (and $M\times$ the inference) is
affordable, while the learned corrector is the better \emph{cheap} play---one backbone
plus a small corrector---demonstrating cheap learned correction on a competitive
surrogate, not a leaderboard entry. (The ensemble's cost axis is itself measured: an
$M$-subset study shows $M{=}2$ already supports the triage decisions while full band
adaptivity needs $M{\approx}4$--$5$; see \autoref{sec:limitations}.) The ensemble and the corrector are
complementary: the ensemble supplies the adaptive $\sigma$ the conformal certificate needs, and
either prediction can be wrapped by the same trust layer.

\begin{table}[t]
\centering
\small
\caption{Conformal calibration of the pressure channel on the SOTA Transolver backbone
($\alpha=0.1$, $100/100$ cal/test split): MC-dropout vs.\ a deep ensemble ($M{=}5$). Both
attain target coverage; only the deep ensemble yields an input-adaptive band (finite $q$,
low ECE). MC-dropout: 3 seeds (mean$\pm$std); deep ensemble: single ensemble training
(split-to-split spread over 20 calibration/test re-draws: coverage $0.899{\pm}0.009$,
$q=2.26{\pm}0.05$; \autoref{sec:conformal}). Sources:
\texttt{results/v2/v2\_results.json}, \texttt{results/uq\_ensemble/uq\_results.json}.}
\label{tab:uq}
\begin{adjustbox}{max width=\textwidth}
\begin{tabular}{l c c c l}
\toprule
$\sigma$ estimator & coverage (target $0.90$) & $q$ & ECE ($\downarrow$) & band \\
\midrule
MC-dropout ($p{=}0.05$) & $0.902{\pm}0.008$ & $\approx 1.6\times 10^{9}$ & $\approx 0.31$ & near-constant \\
deep ensemble ($M{=}5$) & $0.915$ & $2.35$ & $\mathbf{0.074}$ & \textbf{adaptive} \\
\bottomrule
\end{tabular}
\end{adjustbox}
\end{table}

\subsection{The weak grid backbone: corrector trades accuracy, lifts the trust signal}
\label{sec:indist-ablation}

\begin{table}[t]
\centering
\small
\caption{In-distribution AirfRANS \texttt{full} ablation, 3 seeds, mean $\pm$ std
(population std). Lower MSE is better; $\rho$ closer to 1 is better; resid$\leftrightarrow$err
$\rho > 0$ supports the trust signal. Bold marks the best arm per column among the
physics-loss arms; the physics-loss-free arm (starred where best overall) is reported
separately because it removes the physics objective entirely. See \autoref{fig:fig1}.}
\label{tab:indist}
\resizebox{\textwidth}{!}{%
\begin{tabular}{l r r r r r r r}
\toprule
arm & $\texttt{mse\_u}$ & $\texttt{mse\_v}$ & $\texttt{mse\_p}$ & surf.\ $\texttt{mse\_p}$ & $\rho_{C_l}$ & $\rho_{C_d}$ & resid$\leftrightarrow$err $\rho$ \\
\midrule
backbone & $3.479{\pm}0.105$ & $\mathbf{0.385{\pm}0.012}$ & $\mathbf{2444.8{\pm}120.7}$ & $548823{\pm}27946$ & $0.9868{\pm}0.0005$ & $0.895{\pm}0.013$ & $0.397{\pm}0.003$ \\
backbone (no phys.\ loss) & $\mathbf{1.995{\pm}0.042}$ & $0.323{\pm}0.008^{*}$ & $1963.5{\pm}55.5^{*}$ & $1123649{\pm}104822$ & $\mathbf{0.9920{\pm}0.0002}$ & $\mathbf{0.945{\pm}0.008}$ & $0.605{\pm}0.016$ \\
backbone + local corr. & $3.482{\pm}0.058$ & $0.398{\pm}0.007$ & $2469.4{\pm}71.0$ & $541534{\pm}20453$ & $0.9854{\pm}0.0017$ & $0.888{\pm}0.012$ & $0.389{\pm}0.031$ \\
backbone + DEQ corr. & $3.457{\pm}0.811$ & $0.880{\pm}0.064$ & $3832.7{\pm}331.3$ & $\mathbf{361681{\pm}12273}$ & $0.9856{\pm}0.0017$ & $0.923{\pm}0.014$ & $\mathbf{0.827{\pm}0.002}$ \\
\bottomrule
\end{tabular}%
}
\end{table}

This ablation uses the \textbf{weak grid backbone}; it establishes the
backbone-dependence we report as a finding, and should be read against the SOTA-backbone
result in \autoref{sec:v2}, where the same corrector \emph{does} reduce field MSE.

\paragraph{On the weak backbone the corrector trades accuracy.} The DEQ corrector is
\textbf{flat} on volume $\texttt{mse\_u}$ ($3.457$ vs backbone $3.479$; the per-seed deltas
are sign-inconsistent, 2/3, so this is noise, not an improvement) and \textbf{worse} on
$\texttt{mse\_v}$ ($0.385\rightarrow0.880$, $+129\%$, 3/3, non-overlapping distributions)
and $\texttt{mse\_p}$ ($2445\rightarrow3833$, $+57\%$, 3/3). The feed-forward
\texttt{local} corrector helps on \textbf{nothing}: it is at-or-worse than the backbone on
every volume MSE channel and 3/3 worse on $\rho_{C_d}$. On this weak backbone the
residual-conditioned correctors do not improve volume accuracy; the DEQ arm buys
surface-pressure MSE ($-34\%$, 3/3) and a slightly better drag ranking ($\rho_{C_d}$
$0.895\rightarrow0.923$, 3/3) at a measured cost to volume velocity and pressure. We do
\textbf{not} claim the corrector improves accuracy in aggregate \emph{on this backbone}---and
this is precisely the contrast that makes correction quality backbone-dependent.

\paragraph{The detector succeeds.} The same DEQ arm roughly \textbf{doubles} the
residual--error Spearman correlation, \textbf{$0.397\rightarrow0.827$} (3/3, std
$\approx 0.002$, Cohen's $d \approx 175$; effect sizes here and below use the
population std, \texttt{ddof=0}, matching the tables). A low residual tracks low field error far more
reliably with the corrector engaged. This is the detector half of the thesis, and it is
the largest, cleanest effect in the table.

\paragraph{A control that sharpens the thesis.} Removing the physics loss entirely
(\texttt{backbone (no physics loss)}) \textbf{improves} $\texttt{mse\_u}$,
$\texttt{mse\_v}$, $\texttt{mse\_p}$, $\rho_{C_l}$, $\rho_{C_d}$, and the residual--error
correlation (all 3/3), at the cost of \textbf{doubling} surface-pressure MSE
($549\text{k}\rightarrow1124\text{k}$, 3/3). The physics-loss-free backbone is in fact
the \textbf{best force-ranking model in the table} ($\rho_{C_d}$ $0.945$, $\rho_{C_l}$
$0.992$)---better than any corrected model, with no correction loop at all. We therefore
explicitly do \textbf{not} claim the loop delivers best-in-class force ranking. The
physics objective (in the loss or the corrector) buys surface-pressure fidelity and a
strong trust signal, not volume accuracy or force ranking \emph{on this weak backbone}---the
trust-signal role succeeds while the correction role does not, which is exactly the
backbone-dependence the SOTA result in \autoref{sec:v2} resolves.

\paragraph{Retraction (retained from prior drafts).} An earlier single-seed run reported
$\rho_{C_l}$ rising $0.924\rightarrow0.958$ and surface MSE $-25\%$ under the corrector.
These do \textbf{not} replicate: across 3 seeds the backbone already sits at $\rho_{C_l}
= 0.987$ and the DEQ corrector is $0.986$ (flat-to-slightly-worse, sign-inconsistent).
The single-seed figures were artifacts of an undertrained run and are \textbf{retracted}.

\subsection{Residuals detect under distribution shift}
\label{sec:ood}

We evaluate each arm on the regime-disjoint \texttt{reynolds} and \texttt{aoa} splits
(trained on the train range, tested on the held-out range---true extrapolation).

\begin{table}[t]
\centering
\small
\caption{Out-of-distribution AirfRANS ablation (regime-disjoint train$\rightarrow$test),
3 seeds, mean $\pm$ std. The full four-arm table is in
\texttt{results/full\_research/ood/ablation\_ood.md}. See \autoref{fig:fig2}.}
\label{tab:ood}
\resizebox{\textwidth}{!}{%
\begin{tabular}{l l r r r r r r r}
\toprule
task & arm & $\texttt{mse\_u}$ & $\texttt{mse\_v}$ & $\texttt{mse\_p}$ & surf.\ $\texttt{mse\_p}$ & $\rho_{C_l}$ & $\rho_{C_d}$ & resid$\leftrightarrow$err $\rho$ \\
\midrule
reynolds & backbone & $16.218{\pm}0.746$ & $1.187{\pm}0.149$ & $7220.7{\pm}612.2$ & $964641{\pm}61419$ & $0.950{\pm}0.008$ & $0.893{\pm}0.009$ & $0.748{\pm}0.077$ \\
reynolds & + DEQ & $5.300{\pm}1.544$ & $1.208{\pm}0.255$ & $6469.5{\pm}438.0$ & $652281{\pm}45390$ & $0.943{\pm}0.008$ & $0.915{\pm}0.009$ & $0.742{\pm}0.041$ \\
aoa & backbone & $4.312{\pm}0.077$ & $1.310{\pm}0.039$ & $6042.8{\pm}242.5$ & $2000592{\pm}55092$ & $0.963{\pm}0.003$ & $0.926{\pm}0.002$ & $0.314{\pm}0.064$ \\
aoa & + DEQ & $3.538{\pm}0.681$ & $1.407{\pm}0.089$ & $4620.6{\pm}170.2$ & $1042711{\pm}78298$ & $0.966{\pm}0.006$ & $0.905{\pm}0.013$ & $0.746{\pm}0.027$ \\
\bottomrule
\end{tabular}%
}
\end{table}

\paragraph{The trust signal is regime-invariant---the single strongest result.} On the
\texttt{aoa} split the bare backbone's residual--error correlation \textbf{collapses to
$0.314$}, while the DEQ-corrected model holds it at \textbf{$0.746$}
($0.314\rightarrow0.746$, 3/3, Cohen's $d \approx 8.9$, non-overlapping: every corrected
seed $\ge 0.72$, every backbone seed $\le 0.38$). On \texttt{reynolds} both are high
($\approx 0.74$--$0.75$). The corrected model thus keeps a regime-invariant
$\approx 0.74$ trust signal across both shifts, precisely where the one-shot model loses
it. The residual is a reliable detector exactly where detection matters most.

\paragraph{Scoping the rest, honestly (protected negatives).} The OOD ablation also shows
the DEQ corrector reducing the in-distribution$\rightarrow$OOD \emph{gap} on volume
velocity, volume pressure, and surface pressure (e.g.\ \texttt{reynolds}
$\texttt{mse\_u}$ $16.2\rightarrow5.3$, \texttt{aoa} surf.\ $\texttt{mse\_p}$
$2.00\text{M}\rightarrow1.04\text{M}$, both 3/3). We report this as an observation about
the loop's behaviour, \textbf{not} as a competitiveness or generalisation guarantee, and
we explicitly preserve two negatives. (i)~\textbf{$\texttt{mse\_v}$ is an
attenuated-regression artifact, not a gain:} the DEQ corrector's \emph{absolute} OOD
$\texttt{mse\_v}$ is worse than the backbone in every regime ($0.880/1.208/1.407$ vs
$0.385/1.187/1.310$); its in-dist$\rightarrow$OOD gap only looks smaller because its
in-distribution $\texttt{mse\_v}$ is already inflated. (ii)~\textbf{Drag ranking is
task-dependent:} $\rho_{C_d}$ improves on \texttt{reynolds} ($0.893\rightarrow0.915$, 3/3)
but \textbf{regresses on \texttt{aoa}} ($0.926\rightarrow0.905$, worse on all 3 seeds).
We do not claim the loop uniformly preserves force ranking under shift. Methodological
caveat: the in-distribution reference is the \texttt{full} split (a different training
range), so the gap mixes a train-set change with regime shift, and the \texttt{aoa}
$\texttt{mse\_u}$ effect, while 3/3 directional, is seed-0-dominated.

\paragraph{A second dataset: the trust signal generalizes beyond AirfRANS.} The strongest
test of generality is a different dataset entirely. We evaluate the residual trust signal on
\textbf{DeepCFD} \citep{ribeiro2020deepcfd}---2-D laminar incompressible flow over bluff bodies
(circles, squares, rhombi): a different geometry family \emph{and} a different regime (laminar,
versus AirfRANS's turbulent RANS). We train an FNO backbone (3 seeds) on an out-of-distribution
split that holds out the largest bodies---forcing a range of per-case errors---and measure the
same per-case residual--error Spearman correlation (interior continuity+momentum residual;
\texttt{scripts/run\_deepcfd.py}, \texttt{results/deepcfd/deepcfd\_results.json}). The trust
signal holds (\autoref{fig:deepcfd}): \textbf{$\texttt{residual\_error\_spearman} = 0.770 \pm
0.119$} (3 seeds), comparable to---indeed above---the AirfRANS Transolver value ($0.611$). The error spread is
genuine (per-case rel-$L_2$ median $0.063$, max $0.136$), so the correlation is not degenerate;
per-seed case-level bootstrap $95\%$ CIs ($10^4$ resamples) are $[0.65,0.77]$, $[0.58,0.72]$
and $[0.91,0.95]$, all far from zero
(\texttt{results/control/bootstrap\_spearman\_ci.json}).
A clean mechanistic reason it can be \emph{sharper} than the turbulent case: laminar flow has
$\nu_t=0$, so the residual is the \emph{exact} Navier--Stokes residual with no
turbulence-closure ambiguity---a cleaner error indicator. This is direct evidence that the
residual-as-trust-signal property is not specific to AirfRANS, its geometry family, or its
turbulence model. (We report only the trust signal here; DeepCFD's force coefficients carry the
same self-integrator caveat as \autoref{sec:v2} and are not headlined.)

\paragraph{Beyond regime shift: a geometry the model never saw (qualitative).} The splits
above shift the flow regime \emph{within} the airfoil family; we close with a qualitative probe
of cross-\emph{geometry} extrapolation. We run the airfoil-trained Transolver on a
\emph{cylinder}---a bluff body absent from AirfRANS---and ask not whether the prediction is good
(it is not) but whether the residual---a data-free quantity evaluated against the governing PDE,
needing no reference field (unavailable for a cylinder at these $Re$)---registers the geometry
as out-of-distribution. It does, in the residual \emph{magnitude}: the predicted cylinder field
is uniformly less PDE-consistent than in-distribution airfoils with good predictions
($\approx 0.6\%$ rel-$L_2$). Normalising each case's residual by its freestream $U^2$ (the
momentum-residual scale), the dimensionless far-field residual is $\approx 7\times$ higher on
the cylinder (median $\approx 30\times$; \autoref{fig:cylinder},
\texttt{results/figures/cylinder\_control\_airfoil\_ratio.json}). This gap holds in the
\emph{far field}, away from any wall, so it is a genuine geometry-OOD effect---neither a
near-wall discretisation artifact nor a velocity-scale effect (it survives the $U^2$
normalisation). \textbf{Honest scope, with a controlled negative.} This is a qualitative
illustration on one geometry family, and we report a control that \emph{refutes} the tempting
spatial reading: the residual's bright near-body \emph{ring} is a generic no-slip-wall feature,
not an OOD localiser---in-distribution airfoils with good predictions show the \emph{same}
near-body-vs-far-field concentration ratio ($\approx 23$) as the cylinder ($\approx 18$), so
spatial near-wall concentration cannot be read as ``where the prediction failed.'' The
out-of-distribution signal is therefore the field-wide residual \emph{magnitude}, not its
spatial ratio; and the residual remains a local-violation detector, not a global-correctness
oracle. It complements, not replaces, the quantitative regime-shift results above.

\subsection{Residual-as-objective fails: the residual and the error move apart}
\label{sec:iters}

The most direct evidence that the residual is a poor \emph{objective} comes from sweeping
the number of correction iterations on a single trained checkpoint (\autoref{tab:iters}).

\begin{table}[t]
\centering
\small
\caption{Iteration sensitivity sweep: \texttt{results/sensitivity/iters.csv}. Single
sweep (one checkpoint, not seeded). See \autoref{fig:fig4}.}
\label{tab:iters}
\begin{tabular}{r r r r r}
\toprule
\texttt{n\_iters} & $\texttt{mse\_u}$ & surf.\ $\texttt{mse\_p}$ & residual\_norm & resid$\leftrightarrow$err $\rho$ \\
\midrule
0  & $3.924$ & $541204$ & $0.113$ & $0.423$ \\
1  & $2.460$ & $428560$ & $0.336$ & $0.644$ \\
3  & $2.287$ & $336718$ & $0.542$ & $0.647$ \\
5  & $2.439$ & $308381$ & $0.594$ & $0.675$ \\
10 & $2.570$ & $300298$ & $0.618$ & $0.703$ \\
15 & $2.575$ & $300664$ & $0.620$ & $0.710$ \\
\bottomrule
\end{tabular}
\end{table}

As the loop iterates, the \textbf{PDE residual norm rises monotonically}
($0.11\rightarrow0.62$) while the \textbf{field error falls} ($\texttt{mse\_u}$
$3.92\rightarrow2.29$ by iter 3)---the two objectives move in \emph{opposite} directions.
This is the clean statement of detector $\ne$ fixer: the very quantity one would minimise
to ``fix'' the field grows while the field gets better. (Note this does not contradict the
DEQ contraction of \autoref{sec:conformal}: the contraction is in the correction variable
$\delta$, which converges; the PDE residual of the \emph{output field} is a different
quantity, and it is the one that rises.)

\paragraph{The acceptance gate is a throttle, not a filter---and it works.} Earlier drafts
of this paper argued from the acceptance test's \emph{definition} that it must accept almost
nothing: the test is satisfiable by the identity (zero step), so a corrector that cannot
reduce the residual is simply rejected. We have now measured it instead of arguing it, and
the structural argument was wrong. Over $1200$ gated steps
(\texttt{scripts/measure\_acceptance\_gate.py},
\texttt{results/control/acceptance\_gate.json}; $200$ AirfRANS \texttt{full} test cases
$\times$ 3 seeds, on both the deployed Transolver$+$DEQ system and the ensemble-mean path,
with the accept rule, backtrack count and tolerance imported from the shipped
\texttt{correction\_loop} so the gate under test is the deployed gate), the test admits a
step on \textbf{$99.8\%$} of deployed cases---the \emph{full} correction on $50.7\%$ of
them---and refuses outright on one case in $600$. The mechanism is backtracking: where the
full step would raise the residual it is refused, but a halved step lowers it and is
admitted (on the ensemble path the residual ratio falls from a median $1.027$ at the full
step to $0.975$ at half). Crucially the admitted step is not empty progress: it lowers true
rel-$L_2$ error on \textbf{$89.3\%$} of deployed cases, by a median $\mathbf{5.8\%}$
($74.3\%$ and $2.6\%$ on the ensemble-mean path, where the correction has less headroom
because the ensemble mean is already the more accurate field, \autoref{sec:v2}). The
``certified self-correction'' guarantee is therefore real \emph{and} useful: a cheap,
provably non-worsening gate that also happens to improve accuracy. This does not rescue the
residual as an \emph{objective}---the full step still raises the residual on half the
deployed cases while error falls, and \autoref{tab:iters} still shows residual and error
moving apart under iteration. It does mean the two mechanisms deserve separate verdicts,
and we now report them separately. (Scope: a single gated step applied counterfactually to
the deployed DEQ correction, which the shipped DEQ path bypasses by design; not a re-run of
the multi-iteration feed-forward loop of \autoref{tab:indist}, whose corrector checkpoints
were not retained.) We
additionally note that for the DEQ corrector the trust-gating and acceptance-test toggles
are structural no-ops---the DEQ branch bypasses both---so the only live correction knob is
the applied-delta step size (\texttt{results/sensitivity/toggles.json}).

\subsection{The conformal trust layer: coverage and contraction}
\label{sec:conformal}

\paragraph{Contraction (H5).} Iterating the learned DEQ operator
$\delta_{k+1}=T_\theta(\delta_k)$ from a random initialisation on 24 real AirfRANS cases,
the measured per-step ratio $\|\delta_{k+1}-\delta^{*}\|/\|\delta_k-\delta^{*}\|$ (in the
geometric regime, before the $\sim 10^{-7}$ solve floor) has \textbf{median 0.78} and
maximum $0.86$---strictly below the design bound $\kappa = 0.9$ and the falsification
threshold $1$---and reaches a relative distance of $10^{-5}$ to the fixed point in a
median of 37 steps (\texttt{results/certificates/h5\_contraction.json}). The corrector
contracts as designed; this is a property of the \emph{operator}, separate from whether
minimising the output's PDE residual helps the field (it does not, \autoref{sec:iters}).

\paragraph{In-distribution coverage (H4).} With per-cell MC-dropout $\sigma$ (16 passes)
and a 100/100 calibration/test split of the AirfRANS test set, split-conformal
calibration at $\alpha = 0.1$ attains per-channel coverage of \textbf{0.911 ($u$), 0.928
($v$), 0.942 ($p$)}---all in the $[0.85, 0.95]$ band and conservatively above the $0.90$
target, as the $\ge 1-\alpha$ guarantee requires. The conformal multipliers ($q = 0.77,
1.30, 1.75$) both tighten an over-dispersed and widen an under-dispersed raw $\sigma$, so
the calibration is doing real work. Coverage tracks $1-\alpha$ across a sweep of $\alpha$
(\autoref{fig:fig6}). The reliability diagram is exact at the target level but over-covers
at lower nominal levels (ECE $u/v/p = 0.064/0.093/0.130$), the signature of a heavy-tailed
$|\text{error}|/\sigma$ ratio; we therefore claim coverage at the chosen $\alpha$,
\textbf{not} full distributional calibration
(\texttt{results/certificates/h4\_coverage.json}).

\paragraph{The certificate is calibrated on the deployed, corrected field---not the raw
backbone.} NeuroForge ships the DEQ-corrected field, so the coverage guarantee must hold
for \emph{that} output; split-conformal calibration, however, applies to whatever
predictor it scores, and the raw-backbone certificate does \textbf{not} transfer unchanged
through the corrector. We verify this on the dropout-FNO backbone whose DEQ corrector
materially changes the field (mean $|\text{corrected}-\text{raw}|\approx 5$), through the
\emph{same} \texttt{cases\_to\_error\_sigma} path as H4 but with a corrected predict
function (\texttt{scripts/verify\_conformal\_corrected\_field.py}; $15/15$
calibration/test). The clean isolation is a \emph{within-run frozen-$q$ contrast}---same
fitted $q$, same input-conditional $\sigma$, same cases, only the raw vs.\ corrected error
map---and it shows the direction of the coverage change tracking the direction of the error
change: where the corrector \emph{enlarges} the error the raw band drops below the $0.90$
target ($v$ $0.896\rightarrow0.819$, $p$ $0.864\rightarrow0.777$), and where it
\emph{shrinks} the error the raw band over-covers and stays conservative ($u$
$0.856\rightarrow0.888$). This is the coverage analogue of detector $\ne$ fixer
(\autoref{sec:iters}): the DEQ operator contracts in its \emph{own} variable $\delta$ (H5,
median ratio $0.78$), but the Banach bound governs the correction, not $|\text{error}|/
\sigma$, so it cannot imply coverage survival---and on $v/p$ residual contraction coincides
with error \emph{growth}, which is exactly what breaks the raw band. Re-fitting $q$ on the
corrected calibration field---at no change to $\sigma$, which is input-conditional and
identical for both fields---restores coverage toward target where it had collapsed ($v$
$0.819\rightarrow0.894$, $p$ $0.777\rightarrow0.834$) and sharpens it where it was loose
($u$ $0.888\rightarrow0.874$). We therefore calibrate the shipped certificate on the
corrected output; split-conformal validity requires only exchangeable corrected-field
scores, not that $\sigma$ be the corrected field's own uncertainty. These estimates are
directional at $n=15$ (per-case coverage std $\approx 0.15$--$0.23$, so $\approx\pm0.05$
standard error), but the same pattern---raw-$q$ under-covers $v/p$ on the corrected field,
refitting restores it---reproduces on an independent $20/20$ split
(\texttt{results/certificates/probe\_conformal\_after\_deq.json}); the residual $p$
under-coverage is the small-sample heavy-tail effect already noted at $n=100$. On the
deployed SOTA Transolver backbone we now verify this \emph{directly} rather than by
extrapolation: through the same conformal path, with $\sigma$ the frozen five-member
deep-ensemble std and the scored field replaced by the DEQ-corrected output, at the
production $100/100$ split and across all three corrector seeds
(\texttt{scripts/run\_w2\_conformal\_corrected.py},
\texttt{results/uq\_ensemble/w2\_conformal\_corrected.json}; a faithfulness gate reproduces
the published backbone $p$-certificate---$q{=}2.35$, coverage $0.915$---exactly). Re-fitting
$q$ on the corrected calibration field gives a valid certificate (coverage $u/v/p =
0.88/0.91/0.91$ vs the $0.90$ target), and because the corrected field's error is lower at
frozen $\sigma$, that accuracy propagates into the certificate as proportionally tighter
intervals at the same coverage ($q$: $v$ $2.58\rightarrow2.34$, $p$ $2.35\rightarrow2.10$ on
every seed). Applying the \emph{backbone's} own $q$ to the corrected field---the direct
``remains-conservative'' test---over-covers on $v$ and $p$ ($\approx 0.93$), as the field-error
reduction (\autoref{sec:v2}) predicts, while $u$ holds at $\approx 0.88$, just below target:
this tracks the backbone's own heavy-tailed $u$ band ($0.881$) and is inherited from it, not
introduced by the corrector. We therefore ship the certificate refit on the corrected output;
recalibration matters most for correctors that trade field error for residual reduction (the
dropout-FNO above), and is benign-to-beneficial for the deployed Transolver, whose bands it
sharpens.

\paragraph{Split-resampling robustness (20 splits).} Because the certificates above are
computed on one $100/100$ calibration/test draw, we re-draw the split $20$ times
(\texttt{scripts/run\_multisplit\_conformal.py},
\texttt{results/uq\_ensemble/multisplit\_conformal.json}; same code path, frozen ensemble
$\sigma$, $q$ refit per split). Coverage is tightly centred on the $0.90$ target in
\emph{every} arm and channel ($0.895$--$0.902$, split-to-split std $\le 0.014$); the
published seed-$0$ pressure coverage of $0.9155$ sits at the upper end of its distribution
($0.899 \pm 0.009$). The corrected-field sharpening is robust, not a split artifact: on
pressure, $q$ drops from $2.26 \pm 0.05$ (backbone) to $2.02$--$2.07$ for all three
corrector seeds, and on $v$ from $2.55 \pm 0.05$ to $2.28$--$2.31$; on the heavy-tailed $u$
channel the effect is mixed (one seed sharpens, two widen slightly), consistent with the
$u$-neutral verdict above.

\paragraph{Out-of-distribution coverage (empirical, not a guarantee).} Calibrating $q$ on
the in-distribution \texttt{full} split and evaluating on the OOD splits, coverage stays
in the target band: $u$ $0.90/0.91/0.91$, $v$ $0.91/0.94/0.92$, $p$ $0.92/0.94/0.90$
(full / reynolds / aoa) vs the $0.90$ target
(\texttt{results/sensitivity/ood\_coverage.json}, \autoref{fig:fig5}). \textbf{This is an
empirical observation, not a guarantee.} The conformal guarantee assumes exchangeable
calibration and test draws, which fails under shift; coverage nonetheless holds because
the MC-dropout $\sigma$ inflates appropriately on shifted inputs, so the same $q$ still
brackets the (larger) errors. We report this as a desirable empirical property and
explicitly \textbf{do not} assert distribution-free coverage out-of-distribution.

\paragraph{From correlation to decisions: selective prediction.} A rank correlation of
$0.6$ can sound modest, so we also measure what the trust signal is \emph{for}: deciding
which predictions to reject (\texttt{scripts/run\_selective\_prediction.py},
\texttt{results/selective/selective\_prediction.json}; \autoref{fig:risk_coverage}). On the
$200$ AirfRANS test cases (ensemble-mean arm), the residual detects worst-decile-error
cases with AUROC $0.871$; the ensemble $\sigma$ alone gives $0.894$; and the rank-fused
residual$+\sigma$ score reaches \textbf{AUROC $0.905$} (AUPRC $0.681$ against a $0.10$
prevalence baseline). The same residual score transfers: AUROC $0.89$--$0.91$ on the three
DEQ-corrected arms and $0.83$--$0.98$ on the three DeepCFD OOD seeds. In risk--coverage
terms, rejecting the $10\%$ least-trusted cases by the fused score reduces the retained
mean relative-$L_2$ error from $0.0040$ to $0.0035$---recovering $\approx 91\%$ of the
error reduction an \emph{oracle} that rejects by true error could achieve at that budget
(the residual alone recovers $\approx 67\%$; random rejection recovers none). Case-level
bootstrap $95\%$ CIs ($10^4$ resamples;
\texttt{results/control/bootstrap\_spearman\_ci.json}) put the ensemble-arm residual
correlation at $0.610$ $[0.506, 0.698]$ and the fused score at $0.703$ $[0.615, 0.773]$;
every arm's CI excludes zero by a wide margin. The trust signal is thus not merely
correlated with error; it supports near-oracle triage at deployment-relevant rejection
budgets. The audit is also essentially free at the point of use: computing the full
residual diagnostic and trust map, the case-level score, the ensemble spread, and the
conformal band costs a median $1.7$\,ms per case on a single CPU thread
(\texttt{scripts/measure\_audit\_cost.py}, \texttt{results/control/audit\_cost.json}),
against the ${\sim}25$-minute, $16$-core RANS solve that produced each AirfRANS case
\citep{bonnet2022airfrans} --- the certificate costs ${\sim}10^{-6}$ of the computation
it vouches for.

\begin{figure}[t]
\centering
\includegraphics[width=0.95\textwidth]{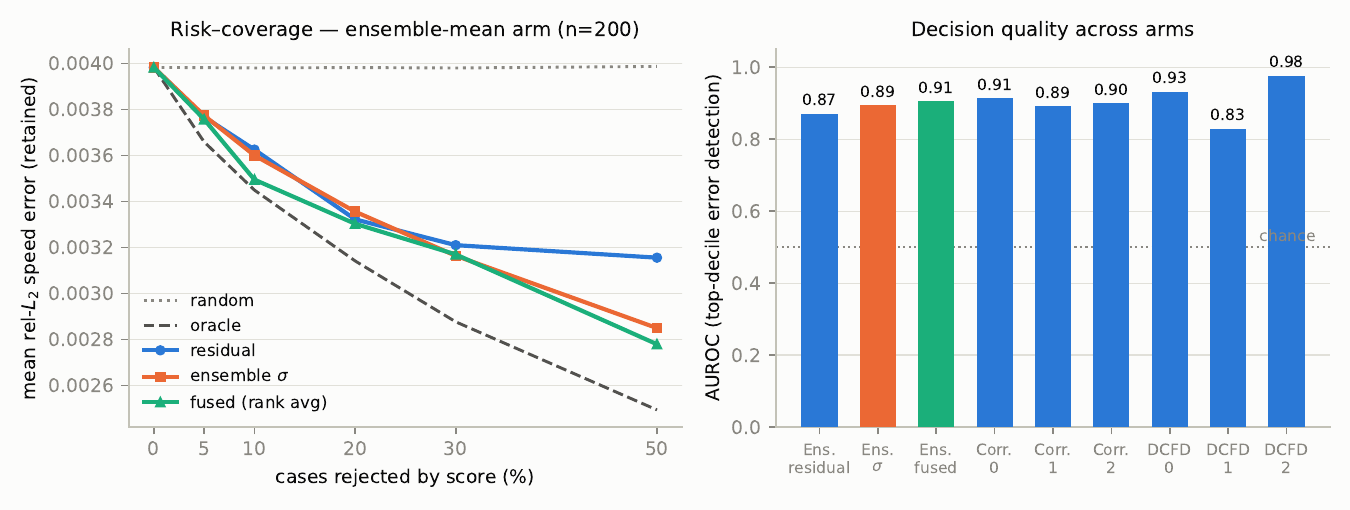}
\caption{\textbf{Selective prediction with the trust signal.} \emph{Left:} risk--coverage
curves on the AirfRANS test set (ensemble-mean arm): mean retained relative-$L_2$ error
after rejecting the least-trusted fraction of cases by physics residual, ensemble
$\sigma$, or their rank fusion, against the oracle (reject by true error) and random
rejection. \emph{Right:} AUROC for detecting worst-decile-error cases across arms
(AirfRANS ensemble-mean, three corrected seeds, three DeepCFD OOD seeds); chance $=0.5$.}
\label{fig:risk_coverage}
\end{figure}

The trust layer depends only on a model's predictions and $\sigma$, so it is
backbone-robust; \autoref{sec:v2} demonstrates it on a SOTA Transolver backbone, where
coverage holds at $0.902{\pm}0.008$ (target $0.90$) and a deep-ensemble $\sigma$ makes the
band input-adaptive on that near-deterministic model ($q=2.35$, ECE $0.074$;
\autoref{tab:uq}).

\subsection{Computational cost and complexity}
\label{sec:cost}

A trust layer is only deployable if auditing costs materially less than the prediction it
audits. We therefore time every stage of the deployed pipeline on the AirfRANS \texttt{full}
test set at resolution $128$ (\texttt{scripts/measure\_inference\_cost.py},
\texttt{results/control/inference\_cost.json}; $100$ cases, median of two timed repeats after
an untimed warm-up, CUDA synchronised around each repeat, on one RTX 4070 Ti with the
package's single-thread BLAS cap in force). \autoref{tab:cost} reports the result.

\begin{table}[t]
\centering
\small
\caption{\textbf{Per-case wall-clock cost of the deployed pipeline}
($128^2$ grid, $100$ AirfRANS \texttt{full} test cases, medians). \emph{Where} is the device
each stage actually runs on. The audit is the entire certificate: residual maps, trust map,
case-level score and conformal band. The classical anchor is the cost of generating one
AirfRANS case, as reported by its authors ($\approx 25$ min on $16$ CPU cores); it is
quoted for scale, not as a matched-hardware benchmark.}
\label{tab:cost}
\begin{adjustbox}{max width=\textwidth}
\begin{tabular}{l l r r l}
\toprule
stage & where & median & p95 & share of a deployed solve \\
\midrule
encode (geometry $\to$ input planes) & CPU, 1 thread & $1413$\,ms & $1620$\,ms & preprocessing \\
backbone forward + rasterise & GPU & $2080$\,ms & $3537$\,ms & $54.4\%$ \\
correction (Neural Residual Iteration) & GPU & $1747$\,ms & $2950$\,ms & $45.7\%$ \\
\midrule
\textbf{deployed solve} (backbone + correction) & GPU & $\mathbf{3822}$\,\textbf{ms} & $6400$\,ms & $100\%$ \\
\midrule
\textbf{audit} (residuals + trust + score + band) & CPU, 1 thread & $\mathbf{1.13}$\,\textbf{ms} & $2.49$\,ms & $\mathbf{0.03\%}$ \\
deep ensemble, $M{=}5$ (for the fused score) & GPU & $10\,114$\,ms & $10\,546$\,ms & $265\%$ \\
\midrule
classical steady-RANS solve \citep{bonnet2022airfrans} & 16 CPU cores & ${\sim}1\,500$\,s & --- & ${\sim}39\,000\%$ \\
\bottomrule
\end{tabular}
\end{adjustbox}
\end{table}

\paragraph{The certificate is free; the ensemble is not.} The audit costs $1.13$\,ms against
a $3822$\,ms solve---one part in $3\,400$, or $0.03\%$. End to end, a case costs $5.24$\,s
(encode $+$ solve $+$ audit) against the ${\sim}1500$\,s that produced its label, a
${\sim}286\times$ reduction, and the certificate is invisible inside that. This is the
quantitative form of the paper's claim that the surrogate can audit itself: the trust score,
the trust map and the conformal band are obtained for three hundredths of a percent of the
computation they vouch for, and they require no ground truth.

The deep-ensemble spread does not share that property. Measured directly by loading five
independent backbones, $M{=}5$ costs $4.86\times$ a single backbone---linear in $M$, as
expected, and confirmed rather than assumed---which is $2.1$ deployed solves of \emph{extra}
compute. The two halves of the fused score therefore sit at opposite ends of the cost scale,
and this sharpens the selective-prediction result of \autoref{sec:conformal} into a
deployment choice rather than a single operating point: the residual alone recovers
$\approx 67\%$ of the oracle's achievable error reduction at essentially zero marginal cost,
while fusing it with the ensemble spread reaches $\approx 91\%$ but triples the compute
budget per case. A practitioner who cannot afford five backbones still gets two thirds of the
benefit for nothing; one who can, buys the last third at a known, measured price.

\paragraph{Scaling.} The audit's operation count is $O(N)$ in the number of grid cells
$N=n_yn_x$: every term is a fixed-width finite-difference stencil, a mask, or a reduction.
Empirically, over a $16\times$ range in $N$ ($64^2$ to $256^2$, fields resampled purely to
vary array extent) the audit rises only $6.3\times$, a log--log slope of $0.64$
($R^2=0.99$)---sub-linear because per-call fixed costs (allocation, Python dispatch) still
dominate at these sizes. The asymptotic rate is the $O(N)$ of the operation count; the
practical consequence is that the audit's share of the pipeline \emph{falls} as grids grow,
since the backbone and the corrector both carry super-linear cost in resolution. Across the
test fleet the backbone's cost showed no measurable dependence on the native cloud size, but
that range is only $1.21\times$ wide ($163\,590$--$197\,146$ points) and cloud size explains
$3\%$ of the timing variance ($R^2=0.03$); we therefore report the observation and decline to
fit a scaling law to it.

\paragraph{What these numbers are not.} The backbone stage times the Transolver forward pass
\emph{and} the rasterisation of its point-cloud output onto the grid; the latter is
single-threaded NumPy under the package's BLAS cap, so the GPU figures are conservative upper
bounds on the model itself rather than a tuned inference benchmark. The encode stage is
likewise unoptimised CPU preprocessing and would be cached in any real design loop, where the
same geometry is queried at many operating points. The classical anchor comes from the
AirfRANS authors on $16$ CPU cores \citep{bonnet2022airfrans}; ours is one consumer GPU plus
one CPU thread, so the ${\sim}286\times$ figure indicates the order of the gap, not a
controlled speed-up measurement. None of these caveats touch the ratio the paper actually
depends on---audit against solve---because both sides of it were measured in the same run, on
the same cases, under the same thread caps.

\subsection{Honest baseline: the grid backbone trails SOTA}
\label{sec:baseline}

\begin{table}[t]
\centering
\small
\caption{Matched-budget baseline on AirfRANS \texttt{full} ($\texttt{n\_train}=800$, 80
epochs, identical rasterisation and \texttt{evaluate\_cases} scoring), 3 seeds. See
\autoref{fig:fig3}.}
\label{tab:transolver}
\resizebox{\textwidth}{!}{%
\begin{tabular}{l r r r r r r r}
\toprule
model & \texttt{n\_params} & $\texttt{mse\_u}$ & $\texttt{mse\_v}$ & $\texttt{mse\_p}$ & surf.\ $\texttt{mse\_p}$ & $\rho_{C_l}$ & $\rho_{C_d}$ \\
\midrule
Transolver (baseline) & 7.35M & $0.120{\pm}0.005$ & $0.088{\pm}0.013$ & $628.5{\pm}29$ & $9110{\pm}500$ & $0.9992{\pm}0.0002$ & $0.9963{\pm}0.0012$ \\
our backbone (\autoref{tab:indist}) & --- & $3.479$ & $0.385$ & $2444.8$ & $548823$ & $0.987$ & $0.895$ \\
our backbone + DEQ & --- & $3.457$ & $0.880$ & $3832.7$ & $361681$ & $0.986$ & $0.923$ \\
\bottomrule
\end{tabular}%
}
\end{table}

Transolver, a SOTA point-cloud physics-attention transformer, trained on the same data
with the same budget and scored identically, is far ahead of our grid backbone: relative
to the bare backbone, roughly \textbf{$29\times$ on $\texttt{mse\_u}$}, \textbf{$4\times$
on $\texttt{mse\_v}$} ($\approx 10\times$ against the volume-regressed DEQ arm),
\textbf{$4\times$ on $\texttt{mse\_p}$}, and \textbf{$\sim 60\times$ on surface pressure}
($\approx 40\times$ against the DEQ arm, whose surface MSE is lower), with near-perfect
force ranking. Our grid backbone is not a competitive surrogate; we make \textbf{no
competitiveness claim} for it. The decisive next step follows in \autoref{sec:v2}: we put
the trust layer and the DEQ corrector \emph{on} this Transolver backbone, where the residual
is a calibrated detector ($\rho=0.611$) and the corrector reduces field MSE by $8$--$25\%$.
The large grid-backbone gap is itself informative: even a much weaker backbone yields a
residual that is a calibrated, shift-robust detector of its own errors.

\paragraph{Note on the two Transolver runs.} The Transolver numbers here
(\autoref{tab:transolver}: $\texttt{mse\_u}$ $0.120{\pm}0.005$, $\texttt{mse\_v}$
$0.088{\pm}0.013$, $\texttt{mse\_p}$ $628.5{\pm}29$) and the v2 backbone-alone numbers
(\autoref{tab:v2}, 5 seeds: $0.133{\pm}0.017$, $0.096{\pm}0.006$, $644{\pm}63$) are the
\emph{same Transolver architecture at the same budget} ($7.35$M params, 80 epochs), trained
in two separate runs; the differences are within the reported per-seed spread. The v2 run is
the one we use as the NeuroForge backbone in \autoref{sec:v2}.

\paragraph{Positioning against the published AirfRANS field.} Our trust layer and corrector
run on a Transolver backbone---a published SOTA architecture \citep{wu2024transolver}---so
competitiveness comes from that architectural choice, not from re-benchmarking our own metric
values. We deliberately do \emph{not} place a NeuroForge row in a cross-method comparison:
AirfRANS results are reported in mutually-incompatible conventions (the AirfRANS-paper baselines
use standardized-field MSE and force coefficients against official OpenFOAM labels;
Transolver-lineage papers report relative-$L_2$; our pipeline uses physical-unit MSE and
self-integrated forces, \autoref{sec:v2}), so any shared cell would mislead. \autoref{tab:airfrans-sota}
reproduces the canonical AirfRANS-paper baselines as field context. Note the benchmark's
signature difficulty---drag ranking $\rho_{C_d}$ is near zero or negative for every baseline,
the headline finding of \citet{bonnet2022airfrans}---which is also why we read our own
$\rho_{C_d}$ as internal self-consistency rather than a solved drag-ranking claim.

\begin{table}[t]
\centering\small
\caption{Published AirfRANS \texttt{full}-task baselines from \citet{bonnet2022airfrans}, as
field context only. Volume MSE on standardized fields ($\times10^{-2}$); surface $\bar p$
($\times10^{-1}$); force relative errors (\%); $\rho$ = Spearman rank correlation vs.\
\emph{official} force labels. NeuroForge is \emph{not} a comparable row (physical-unit MSE,
self-integrated forces; \autoref{sec:v2}). The near-zero/negative $\rho_{C_d}$ is the
benchmark's known drag-ranking difficulty.}
\label{tab:airfrans-sota}
\begin{tabular}{l r r r r r r}
\toprule
method & $\bar u_x$ & $\bar p$ & surf.\ $\bar p$ & $C_L$ err & $C_D$ err & $\rho_{C_l}\,/\,\rho_{C_d}$ \\
\midrule
MLP         & $0.95$ & $0.74$ & $1.13$ & $0.77$  & $4.29$  & $0.913\,/\,{-}0.117$ \\
GraphSAGE   & $0.83$ & $0.66$ & $0.66$ & $0.52$  & $4.05$  & $0.965\,/\,{-}0.303$ \\
PointNet    & $3.50$ & $1.15$ & $0.93$ & $0.74$  & $14.64$ & $0.938\,/\,{-}0.022$ \\
Graph U-Net & $1.52$ & $0.66$ & $0.39$ & $0.49$  & $10.39$ & $0.967\,/\,{-}0.138$ \\
\bottomrule
\end{tabular}
\end{table}

\subsection{Figures}

\begin{figure}[t]
\centering
\includegraphics[width=0.85\textwidth]{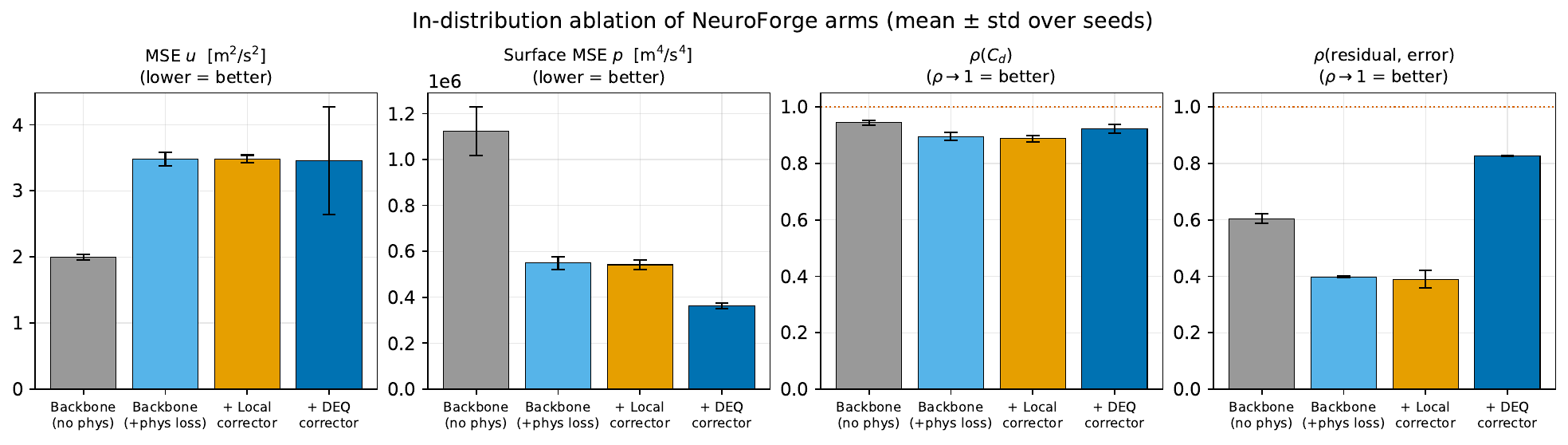}
\caption{In-distribution ablation (\autoref{tab:indist}): the DEQ corrector lifts the
trust signal ($0.40\rightarrow0.83$) and surface fidelity while regressing volume
$\texttt{mse\_v}$/$\texttt{mse\_p}$.}
\label{fig:fig1}
\end{figure}

\begin{figure}[t]
\centering
\includegraphics[width=0.85\textwidth]{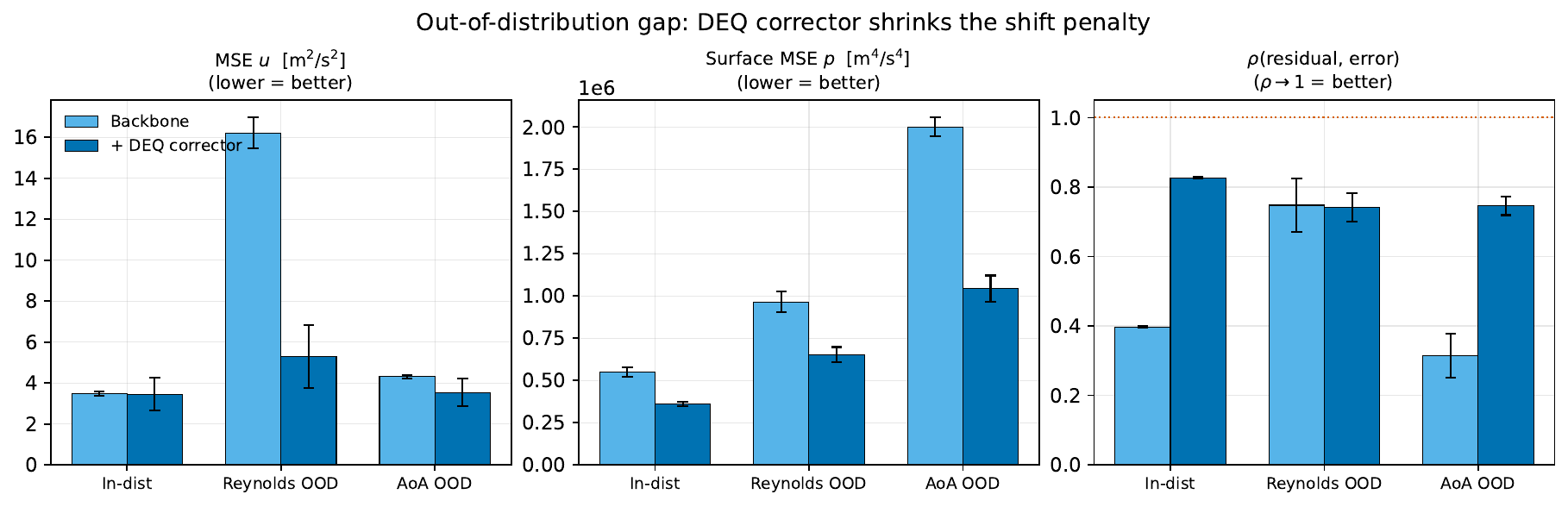}
\caption{OOD gap and the regime-invariant trust signal (\texttt{aoa}
$0.31\rightarrow0.75$); the bare backbone's signal collapses, the corrected model's holds.}
\label{fig:fig2}
\end{figure}

\begin{figure}[t]
\centering
\includegraphics[width=0.85\textwidth]{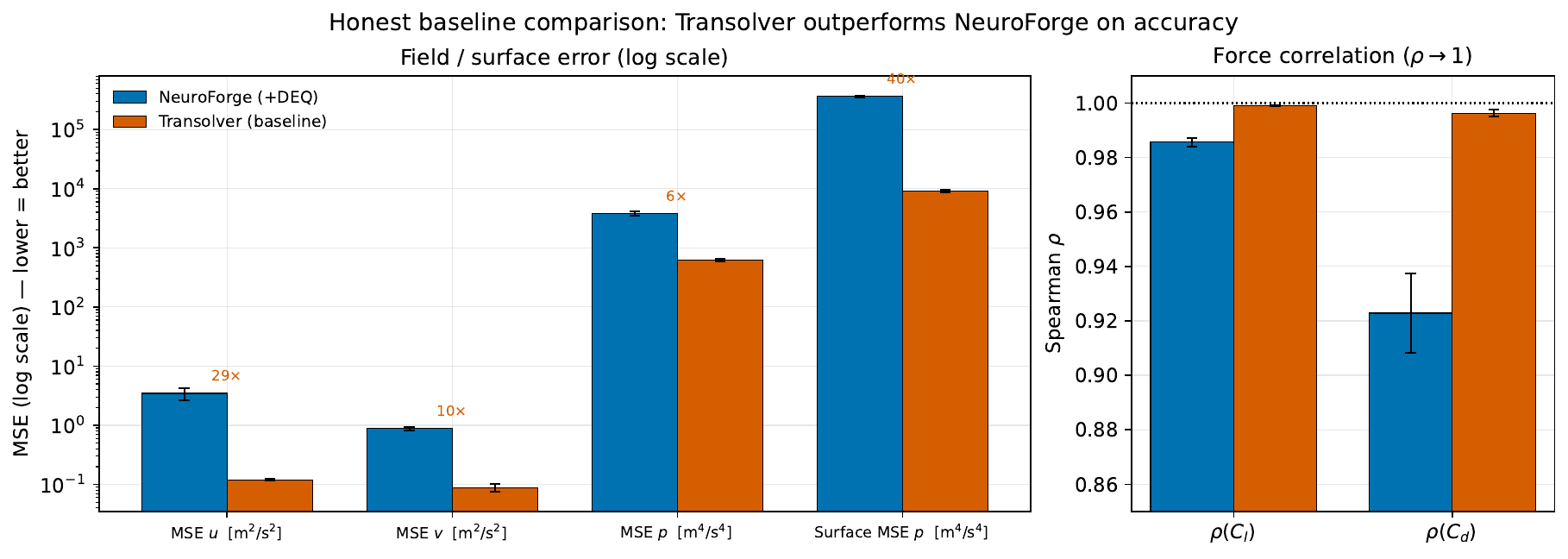}
\caption{Honest matched-budget baseline; the grid backbone trails Transolver
$\sim 4$--$60\times$ across channels (no competitiveness claim).}
\label{fig:fig3}
\end{figure}

\begin{figure}[t]
\centering
\includegraphics[width=0.85\textwidth]{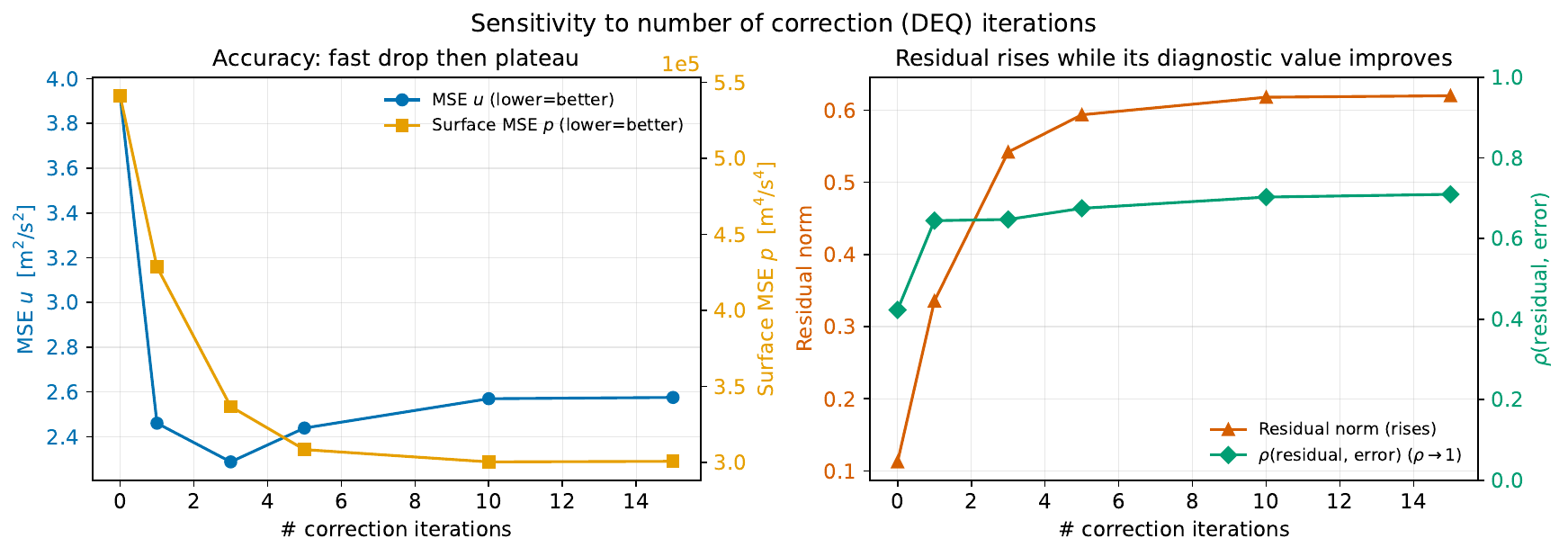}
\caption{The money figure for the headline: PDE residual \emph{rises} while field error
\emph{falls} across correction iterations (detector $\ne$ fixer).}
\label{fig:fig4}
\end{figure}

\begin{figure}[t]
\centering
\includegraphics[width=0.85\textwidth]{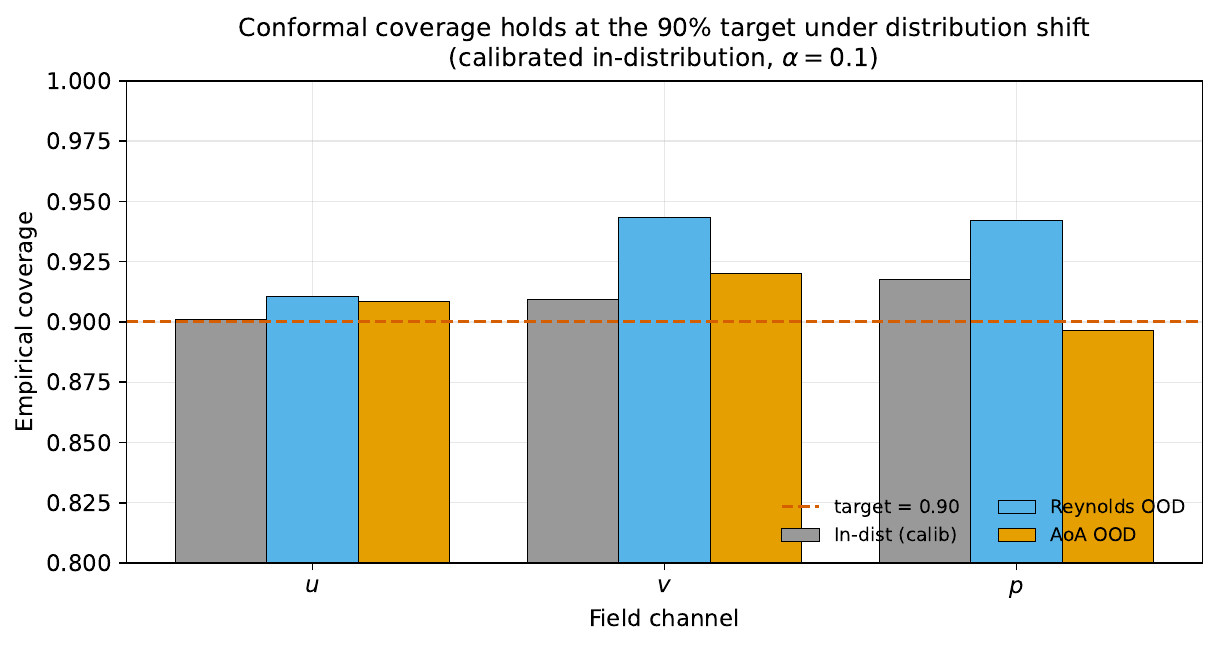}
\caption{Conformal coverage stays in-band on OOD splits (empirical, via $\sigma$
inflation).}
\label{fig:fig5}
\end{figure}

\begin{figure}[t]
\centering
\includegraphics[width=0.85\textwidth]{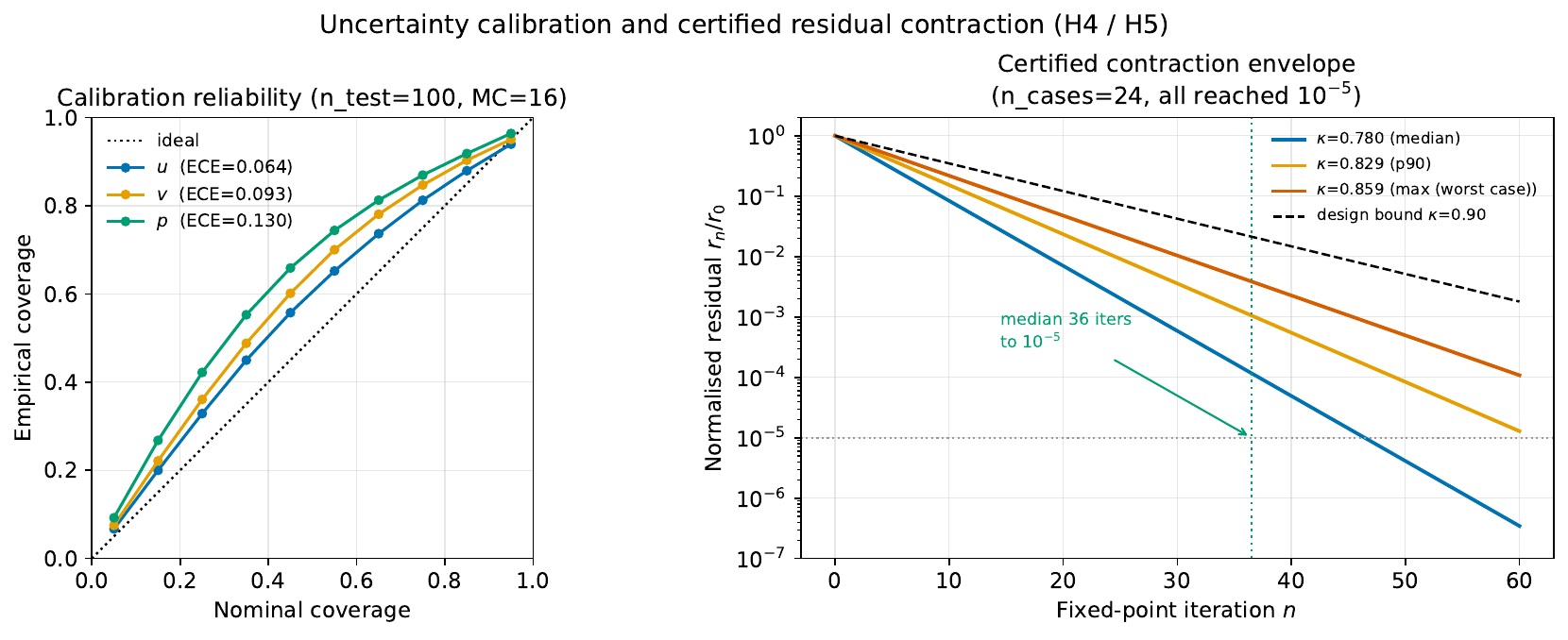}
\caption{In-distribution reliability / coverage-vs-$\alpha$ and the measured DEQ
contraction ($0.78 < 1$).}
\label{fig:fig6}
\end{figure}

\begin{figure}[t]
\centering
\includegraphics[width=0.75\textwidth]{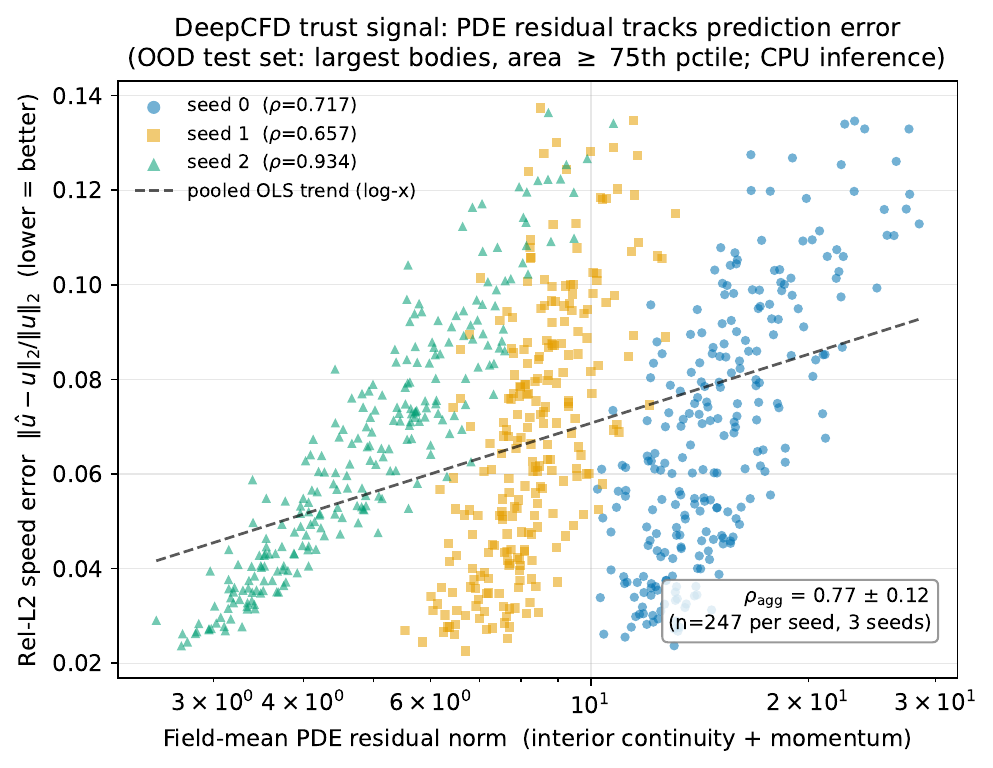}
\caption{The trust signal generalizes to a second dataset (DeepCFD, 2-D laminar flow over bluff
bodies; OOD held-out-largest-body split, 3 seeds). Per-case field-mean PDE residual norm
(interior continuity+momentum) vs.\ rel-$L_2$ speed error: a low residual tracks a low error, as
on AirfRANS. Per-case Spearman $\rho=0.77{\pm}0.12$ (per seed $0.72/0.66/0.93$; the trend holds
\emph{within} each seed---the colour-coded clusters span different residual ranges by
initialisation). The residual is a reference-free error proxy on a geometry family and flow
regime absent from AirfRANS.}
\label{fig:deepcfd}
\end{figure}

\begin{figure}[t]
\centering
\includegraphics[width=\textwidth]{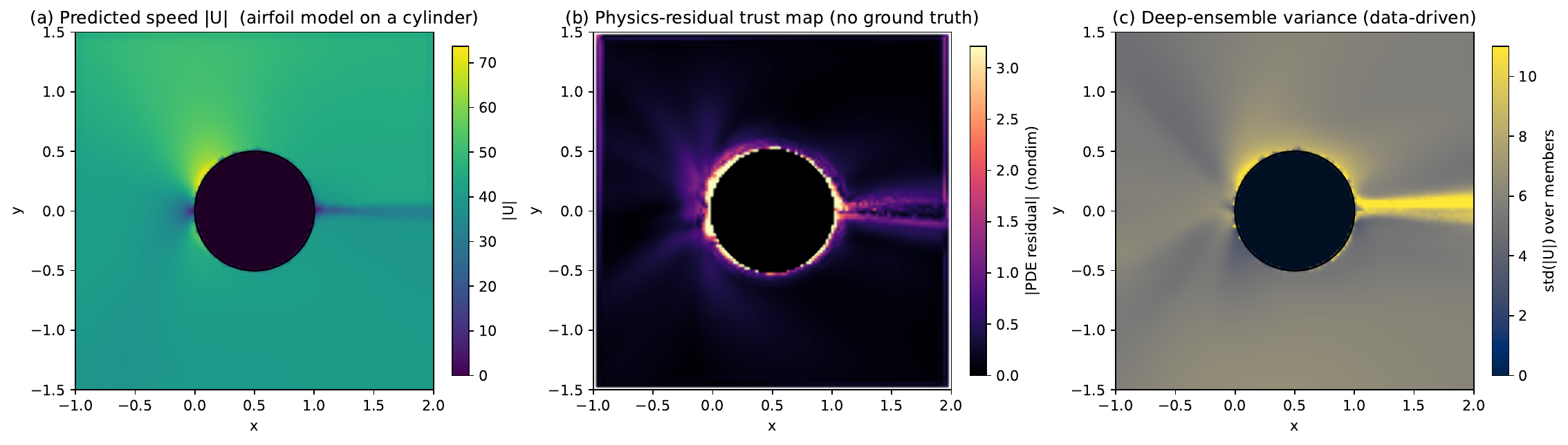}
\caption{Cross-geometry OOD (qualitative): an airfoil-trained model on a cylinder.
(a)~predicted speed (visibly broken---the flow is not arrested at the body); (b)~the
physics-residual map (no ground truth used), elevated field-wide; its bright near-body ring is
a generic no-slip-wall feature common to all bodies (a matched in-distribution-airfoil control
shows the same near/far ratio), \emph{not} an OOD localiser; (c)~deep-ensemble variance (the
uncorrected five members, the source of $\sigma$), for comparison. The OOD signal is the
field-wide residual \emph{magnitude}---$\approx 7\times$ higher than in-distribution airfoils
in the far field, away from any wall (\autoref{sec:ood})---not its spatial concentration.}
\label{fig:cylinder}
\end{figure}

\subsection{Reading against the pre-registered hypotheses}

\begin{itemize}
\item \textbf{H1 (the corrector improves accuracy).} \emph{Backbone-dependent.} On the
\textbf{weak grid backbone} it is \emph{not supported as pre-registered}: the feed-forward
corrector helps on nothing and the DEQ corrector regresses volume
$\texttt{mse\_v}$/$\texttt{mse\_p}$ (3/3) while flat on $\texttt{mse\_u}$. On the
\textbf{SOTA Transolver backbone} it \emph{is supported}: the DEQ corrector reduces all
three volume MSE channels on all 5 seeds ($\texttt{mse\_u}$ $-8\%$, $\texttt{mse\_v}$
$-21\%$, $\texttt{mse\_p}$ $-25\%$; \autoref{sec:v2}, \autoref{tab:v2}) with force ranking
held. Correction quality is therefore backbone-dependent---a finding, not a hidden
negative---and the supervised corrector works where the backbone is strong.
\item \textbf{H2 (the residual is a valid trust signal).} \emph{Supported, robustly, in-
and out-of-distribution.} resid$\leftrightarrow$err $\rho > 0$ for every arm/seed/split;
DEQ lifts it to $0.83$ in-dist and makes it regime-invariant ($\approx 0.74$) OOD,
rescuing the \texttt{aoa} collapse $0.31\rightarrow0.75$.
\item \textbf{H3 (DEQ $\ge$ feed-forward corrector).} \emph{Holds on the design axis
only.} DEQ beats the feed-forward corrector on surface MSE, $\rho_{C_d}$, and the trust
signal (3/3), but is worse on volume $\texttt{mse\_v}$/$\texttt{mse\_p}$; the two trade
off.
\item \textbf{H4 (conformal coverage).} \emph{Supported in-distribution}
($0.91/0.93/0.94$ at $\alpha=0.1$; SOTA backbone $0.902$, deep-ensemble band adaptive at
ECE $0.074$, \autoref{tab:uq}); \emph{empirical only OOD} (\autoref{sec:conformal}).
\item \textbf{H5 (DEQ contraction).} \emph{Supported} (measured factor
$0.78 < \kappa = 0.9 < 1$).
\end{itemize}

Every number comes from the committed, reproducible harness
(\texttt{benchmarks/ablation.py}, \texttt{scripts/run\_certificates.py},
\texttt{results/sensitivity/}).


\section{The residual floor: a self-consistent monitor is a good detector but a bad fixer}
\label{sec:residual_floor}

The iteration sweep of \autoref{sec:iters} (\autoref{tab:iters}) shows a counter-intuitive
fact: driving the monitored physics residual \emph{down} drives the field error \emph{up}.
We now explain this exactly. The mechanism is not ill-conditioning per se; it is that the
\emph{monitored} residual is self-consistent at a field that is not the truth---so its
minimiser sits away from $u^\star$ by an irreducible amount. The same decomposition that
forces this also explains why the residual is nonetheless an excellent error
\emph{detector}: one equation, both verdicts.

\subsection{Setup: the monitored operator}
\label{sec:rf_operator}

Let $R_h$ be the residual operator the framework actually monitors and (in the
objective/gate role) minimises: the \emph{discrete} steady incompressible RANS
\textbf{continuity and momentum} residual on the $(n_y,n_x)$ grid, evaluated on physical
fields with $\nu_{\mathrm{eff}}=\nu+\nu_t$. By construction (\texttt{Diagnostics.residual\_norm};
\autoref{sec:method}) it \textbf{omits the no-slip wall term}. Write the objective
$J(u)=\tfrac12\|R_h(u)\|^2$. Let $u^\star$ be the discrete ground-truth field, define the
\emph{residual floor} $r^\star:=R_h(u^\star)$, and for any prediction $\hat u$ write the
error $e:=\hat u-u^\star$ with the first-order expansion
\begin{equation}
R_h(\hat u)=r^\star+L\,e+o(\|e\|),\qquad L:=DR_h(u^\star).
\label{eq:rf-decomp}
\end{equation}

\begin{theorem}[Residual floor: misplaced minimum, quantified floor, and local divergence]
\label{thm:residual-floor}
Assume \textup{(H1)} $R_h$ is the monitored discrete operator above (continuity$+$momentum,
no-slip omitted); and \textup{(H2)} the floor is nonzero, $r^\star:=R_h(u^\star)\neq 0$. Write
$L:=DR_h(u^\star)$ and let $\sigma_{\min}$ denote its smallest \emph{nonzero} singular value
(no injectivity is assumed; $L$ has a non-trivial kernel---see the remark below). Then:
\begin{enumerate}
\item[\textup{(i)}] \emph{(Misplaced minimum, exact.)} The uniform-freestream field $u_\infty$
satisfies $R_h(u_\infty)=0$ exactly, so $\min_u J=0$; whereas $J(u^\star)=\tfrac12\|r^\star\|^2>0$.
Hence $u^\star$ is \textbf{not} a minimiser of $J$.
\item[\textup{(ii)}] \emph{(Misplaced stationarity.)} $\nabla J(u^\star)=L^\top r^\star$, so
$u^\star$ is a stationary point of $J$ iff $L^\top r^\star=0$; if $r^\star\notin\ker L^\top$,
gradient flow on $J$ leaves $u^\star$.
\item[\textup{(iii)}] \emph{(Quantified floor.)} The \emph{minimum-norm} minimiser of the
linearised objective $\tfrac12\|r^\star+Le\|^2$ is $e_\infty=-L^{+}r^\star$ ($L^{+}$ the
Moore--Penrose pseudoinverse), leaving irreducible field error
$\|e_\infty\|=\|L^{+}r^\star\|\le\|r^\star\|/\sigma_{\min}$, which grows as
$\sigma_{\min}\to 0$ (ill-conditioning).
\item[\textup{(iv)}] \emph{(Local divergence direction.)} Under residual-gradient flow
$\dot e=-\nabla J(\hat u)$, the field error obeys
$\frac{d}{dt}\tfrac12\|e\|^2\big|_{0}=-\|Le\|^2-(Le)\cdot r^\star$. There is an open cone of
error directions---those with $(Le)\cdot r^\star<-\|Le\|^2$---on which a residual-reducing
step strictly \textbf{increases} the field error.
\end{enumerate}
\end{theorem}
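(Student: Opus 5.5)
The plan is to treat the four parts one by one, each as a short computation from the expansion \eqref{eq:rf-decomp} together with the definition of $R_h$.

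\emph{Part (i).} I will verify $R_h(u_\infty)=0$ directly from the discrete stencils. The uniform freestream has $u\equiv U_\infty$, $v\equiv V_\infty$ and $p$ constant. Central and one-sided differences are exact on constants, so every first difference vanishes, and so does the discrete Laplacian. Hence $r_c=0$. Each momentum residual is a sum of products in which one factor is a derivative of a constant field, so $r_x=r_y=0$ for any $\nu_t$ (the $\nu_{\mathrm{eff}}$ multiplies a zero Laplacian).

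The essential input is (H1): the no-slip wall term is omitted. The only term that would penalise $u_\infty$ is therefore absent. Residuals inside the solid are zeroed by construction, and the far-field term is satisfied exactly by $u_\infty$ if it is counted at all. So $J(u_\infty)=0=\min J$, since $J\ge 0$. Then (H2) gives $J(u^\star)=\tfrac12\|r^\star\|^2>0$, so $u^\star$ is not a minimiser.

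\emph{Part (ii).} This is the chain rule: $\nabla J(u)=DR_h(u)^\top R_h(u)$, and evaluating at $u^\star$ gives $L^\top r^\star$. The stationarity criterion follows at once. If $L^\top r^\star\neq0$, then $-\nabla J(u^\star)$ is a nonzero descent direction, so the gradient-flow trajectory started at $u^\star$ has nonzero velocity and leaves $u^\star$.

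\emph{Part (iii).} This is standard least squares. The minimisers of $\tfrac12\|r^\star+Le\|^2$ are the solutions of the normal equations $L^\top L e=-L^\top r^\star$. The minimum-norm solution is $-L^{+}r^\star$. Using the SVD $L=U\Sigma V^\top$, we have $\|L^{+}\|_2=1/\sigma_{\min}$ with $\sigma_{\min}$ the smallest \emph{nonzero} singular value, which gives the bound. I would remark that the bound is attained when $r^\star$ aligns with the corresponding left singular vector; this justifies the phrase ``grows as $\sigma_{\min}\to0$''. No injectivity is needed, because $L^{+}$ is defined regardless.

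\emph{Part (iv).} Since $u^\star$ is fixed, $\dot e=\dot{\hat u}=-\nabla J(\hat u)$. Linearising gives $\nabla J(\hat u)=L^\top(r^\star+Le)+o(\|e\|)$, where the error terms come from $DR_h(\hat u)-L=O(\|e\|)$, assuming $R_h$ is $C^2$. That holds because $R_h$ is polynomial in the field values. Then
$$\frac{d}{dt}\tfrac12\|e\|^2=e\cdot\dot e=-(Le)\cdot r^\star-\|Le\|^2+\text{h.o.t.}$$
This step contains the main obstacle. To leading order, the terms $e^\top L^\top r^\star$ and $\|Le\|^2$ scale differently in $\|e\|$ (linearly versus quadratically). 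The higher-order remainder is $O(\|e\|\,\|r^\star\|\,\|e\|)$ from the Hessian term $e^\top D^2R_h[e]^\top r^\star$, which is of the same order as $\|Le\|^2$.

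I would therefore state the identity at the linearised level, as the theorem's ``$|_{0}$'' signals: with $L$ frozen, the expression is exact. The cone claim then follows.

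The set $\{e: (Le)\cdot r^\star+\|Le\|^2<0\}$ is open, by continuity. It is nonempty whenever $L^\top r^\star\neq0$: take $e=-t\,L^\top r^\star$ with small $t>0$, so the linear term $-t\|L^\top r^\star\|^2$ dominates the quadratic one. The set is also invariant under positive scaling near the origin in the sense of an asymptotic cone: for any direction $d$ with $d\cdot L^\top r^\star<0$, all sufficiently small multiples $td$ lie in it. On this set the time derivative of $\tfrac12\|e\|^2$ is strictly positive, so the residual-reducing flow strictly increases the field error.
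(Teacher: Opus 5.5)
Your proposal is correct and follows the paper's proof part by part: in (i) a constant field annihilates every stencil, (ii) is the chain rule, (iii) comes from the normal equations plus the SVD bound $\|L^{+}\|_2=1/\sigma_{\min}$, and (iv) is computed for the frozen-$L$ flow $\dot e=-L^\top(r^\star+Le)$, which is exactly the paper's ``to first order'' reading. Your witness $e=-tL^\top r^\star$ (more generally, small multiples of any $d$ with $d\cdot L^\top r^\star<0$) is interchangeable with the paper's choice $Le\approx-\alpha r^\star/\|r^\star\|$, $0<\alpha<\|r^\star\|$, and two of your remarks are accurate sharpenings of the paper's wording: the Hessian term $D^2R_h[e,e]\cdot r^\star$ is of the same order as $\|Le\|^2$, and the set is closed under shrinking but not under dilation.
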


\begin{proof}
(i) For a spatially constant field every finite-difference derivative vanishes, so the
discrete continuity and momentum residuals are identically zero: $R_h(u_\infty)=0$, giving
$J(u_\infty)=0=\min J$ (as $J\ge0$). By (H2), $J(u^\star)=\tfrac12\|r^\star\|^2>0$.
(ii) $\nabla J(u)=DR_h(u)^\top R_h(u)$; evaluate at $u^\star$ and use \eqref{eq:rf-decomp}.
(iii) $\tfrac12\|r^\star+Le\|^2$ is convex in $e$; its minimum-norm minimiser is the
least-squares solution $e_\infty=-L^{+}r^\star$, which requires no injectivity (it is the
Moore--Penrose pseudoinverse of the normal equations $L^\top L e=-L^\top r^\star$), and
$\|L^{+}\|_2=1/\sigma_{\min}$ with $\sigma_{\min}$ the smallest nonzero singular value of $L$.
(iv) To first order $\dot e|_0=-L^\top(r^\star+Le)$, so
$\frac{d}{dt}\tfrac12\|e\|^2|_0=e\cdot\dot e=-e^\top L^\top(r^\star+Le)=-\|Le\|^2-(Le)\cdot r^\star$.
This is positive iff $(Le)\cdot r^\star<-\|Le\|^2$; taking $Le$ near $-\alpha\,r^\star/\|r^\star\|$
(feasible whenever $r^\star$ has a nonzero component in $\operatorname{range}L$) the
condition holds on an open cone with $0<\alpha<\|r^\star\|$.
\end{proof}

\paragraph{Good detector, bad fixer---from one decomposition.} Equation~\eqref{eq:rf-decomp}
splits the monitored residual into a fixed floor $r^\star$ and an error-driven term $Le$.
\emph{Far} from the truth, $\|Le\|\gg\|r^\star\|$, so $\|R_h(\hat u)\|\approx\|Le\|$, and since
$\sigma_{\min}(L)\|e\|\le\|Le\|\le\sigma_{\max}(L)\|e\|$ the residual norm is a two-sided proxy
for the field error (up to $\kappa(L)$)---this is the calibrated \emph{trust signal}
(rank correlation up to $0.83$, \autoref{sec:indist-ablation}). \emph{Near} the truth,
$\|Le\|\lesssim\|r^\star\|$, the floor dominates, and the residual decouples from the error:
its minimiser is displaced from $u^\star$ by \autoref{thm:residual-floor}(iii) and a
residual-reducing step can grow the error by (iv)---this is the failure of the residual as a
correction \emph{objective} (\autoref{sec:iters}). Note the divergence in (iv) vanishes when
$r^\star=0$: it is driven by the \emph{floor}, not by ill-conditioning, which merely sets the
floor's magnitude in (iii).

\paragraph{The kernel: modes the monitor can neither detect nor fix.} $L$ is genuinely
non-injective---$R_h$ maps the four field channels $(u,v,p,\nu_t)$ to three residual channels
(continuity, two momentum), and a uniform pressure offset is an exact null mode (the monitored
$R_h$ sees $p$ only through $\nabla p$, and omits the no-slip term that would otherwise pin the
gauge). Error components in $\ker L$---a constant pressure shift, and the residual-blind part of
$\nu_t$---are therefore \emph{invisible} to the monitor: it can neither flag them (detector) nor
remove them (fixer). The non-trivial kernel is not a defect of the bound but part of the
phenomenon: a self-consistent monitor is silent on its own null modes, which is why we report
coverage on the pressure channel up to a gauge and never claim the residual certifies $\nu_t$.

\paragraph{Empirical confirmation.} On $200/200$ real AirfRANS test cases, the monitored
residual of the ground-truth field is substantially nonzero (\textbf{(H2) holds}):
$\|r^\star\|$ has mean $0.192$ (median $0.133$), while the uniform-freestream field gives
$\|R_h(u_\infty)\|=0$ in every case---the objective strictly prefers a physically wrong field
to the truth (\texttt{results/certificates/residual\_floor\_realdata.json}). A trained
backbone's over-smoothed prediction (here the dropout-FNO of
\texttt{checkpoints/certificates\_deq.pt}, distinct from the Transolver headline) sits
\emph{below} the truth floor in $160/200$ cases, consistent with the loop's drive away from
$u^\star$ in \autoref{tab:iters}.

\paragraph{Robustness to the boundary term.} One might object that the monitored $R_h$
omits the no-slip term, so the negative is merely an artifact of dropping the boundary
constraint. We tested this with the \emph{framework's own} boundary-inclusive residual
$\rho_{\rm res}=\sqrt{r_c^2+r_x^2+r_y^2+r_{bc}^2}$ (the trust-map definition, $r_{bc}$ the
proximity-weighted no-slip penalty) along the correction sweep (dropout-FNO$+$DEQ, $n{=}10$,
same regime as \autoref{tab:iters}; \texttt{results/control/bc\_inclusive\_sweep.json}). The
negative \emph{survives}: as the corrector cuts field error ($\texttt{mse\_u}$
$3.17\!\to\!2.32$), the boundary-\emph{inclusive} residual rises ($0.145\!\to\!0.811$) in
lockstep with the boundary-excluded one ($0.125\!\to\!0.807$)---along the correction path the
prediction and the truth both approximately satisfy no-slip, so $r_{bc}$ stays small and
roughly constant and the interior floor dominates. (Empirically the uniform field also keeps a
\emph{lower} $\rho_{\rm res}$ than the truth, $0.073$ vs $0.187$, because $r_{bc}$ is
non-dimensionalised by $U$ and contributes only $\approx0.07$ in RMS.) The objection generalises---\emph{some} no-slip weight might rescue the residual---and
we can close it in full rather than concede it. Writing the weighted monitor
$\rho_\lambda=\sqrt{\|R_h\|^2+\lambda\,\overline{r_{bc}^2}}$, both logged components are
recoverable at every sweep point, so $\lambda$ can be swept post hoc at no computational
cost (\texttt{scripts/bc\_weight\_sweep.py}, \texttt{results/control/bc\_weight\_sweep.json}).
Both legs survive for \emph{every} $\lambda\ge0$, in closed form rather than on a grid of
sampled values. \textbf{(A)}~The uniform field remains a spurious minimum: it has both a
smaller interior residual and a smaller no-slip term than the truth
($\overline{r_{bc}^2}=0.0053$ versus $0.0073$---on a $128^2$ grid the boundary layer is
sub-cell, so the rasterised truth itself carries near-wall velocity), hence
$\rho_\lambda(u_\infty)<\rho_\lambda(u^\star)$ for all $\lambda$. \textbf{(B)}~Along the
correction path, where field error falls ($3.17\!\to\!2.32$), \emph{both} increments are
positive---the interior term by $+0.635$ and the no-slip term by $+0.0019$---so
$\rho_\lambda$ rises for all $\lambda$; there is no crossover weight. Up-weighting the
boundary condition cannot convert this residual into a usable correction objective,
because the no-slip violation grows along the very path that reduces the error.

\paragraph{Assumptions, stated plainly.} The theorem is about the \emph{monitored discrete}
operator $R_h$---what the loop minimises---not the continuous PDE; (H2) holds here because
$R_h$ omits the no-slip closure \emph{and} the $128^2$ grid under-resolves the boundary layer
and drops the spatially-varying-$\nu_t$ term $\nabla\nu_t\!\cdot\!\nabla u$. The floor is a
\emph{genuine} displacement whenever $r^\star$ has a component in $\operatorname{range}L$,
which it does---$r^\star$ is a discretisation/closure residual, not a pure gauge null mode.
The detector leg is a \emph{regime} statement: its tightness is controlled by the conditioning
of $L$ on $(\ker L)^\perp$, so we claim a positive rank correlation, not an isometry.

\paragraph{Relation to prior work.} \citet{krishnapriyan2021failure} tie ill-conditioning to
the \emph{optimisation difficulty} of PINN losses, and \citet{wang2022ntk} analyse spectral
bias and loss-term imbalance in PINN \emph{training dynamics}; both concern trainability, not
a displacement of the residual minimiser from the truth. The forward-vs-backward error gap
$\|e\|\!\le\!\|R_h\|/\sigma_{\min}$ is classical \citep{trefethenbau1997}, and a-posteriori
PINN bounds upper-bound error \emph{by} residual (the benign regime). That a residual blind to
certain modes makes the truth a non-minimiser is, in its bare form, elementary (the pressure
gauge above). Our contribution is sharper and threefold: \textbf{(a)} the \emph{quantified}
operator-specific floor $\|e_\infty\|=\|L^{+}r^\star\|\le\|r^\star\|/\sigma_{\min}$ for this
steady-RANS monitor; \textbf{(b)} the \emph{detector/fixer duality}---a single decomposition
\eqref{eq:rf-decomp} yielding both the positive trust-signal verdict and the negative
objective verdict, including the undetectable-and-uncorrectable kernel modes; and \textbf{(c)}
the demonstration on a learned neural-operator CFD corrector (AirfRANS), where this is exactly
the dissociation our experiments report.

\section{Limitations}
\label{sec:limitations}

\begin{itemize}
\item \textbf{Backbone-dependent correction quality.} The supervised DEQ corrector reduces
field MSE by $8$--$25\%$ on the SOTA Transolver backbone (\autoref{sec:v2}) but is
flat-to-unhelpful on the weak grid backbone (\autoref{sec:indist-ablation}). We report both
regimes; we do not claim the corrector helps on an arbitrary backbone, only that it helps on a
strong one. The trust signal, by contrast, is positive and substantial across all three backbones tested
($\rho=0.611$ SOTA Transolver, $\approx 0.40$--$0.83$ weak grid backbone, $0.851$ bare
MeshGraphNet GNN). We make no competitiveness claim for the grid backbone (it
trails SOTA $\sim 4$--$60\times$); the Transolver backbone we run \emph{on} is itself
competitive.
\item \textbf{Grid resolution and near-wall blindness.} A uniform $128^2$ Cartesian grid
cannot resolve a Re $\approx 10^6$ boundary layer (sub-cell), so wall quantities are
approximate and the absolute MSE values are not comparable to body-fitted solvers.
Relatedly, the monitored residual zeroes the solid-adjacent (wall-ring) cells---where the
central stencil reaches into the $u\!=\!v\!=\!0$ discontinuity and registers spurious
values---so the trust signal operates on the bulk and is structurally blind to the immediate
near-wall band, precisely where near-wall error concentrates on a high-Re airfoil.
\item \textbf{Two 2-D datasets; not yet 3-D.} The trust signal is validated on two 2-D
datasets---turbulent-RANS AirfRANS airfoils and laminar DeepCFD bluff bodies
(\autoref{sec:ood}, $\rho=0.77{\pm}0.12$)---so the phenomenon is not AirfRANS-specific, but all
results remain 2-D. Extending the residual verifier to 3-D (AhmedML, \citealp{ahmedml2024};
DrivAerNet++, \citealp{elrefaie2024drivaernet}) requires 3-D residual operators and multi-TB,
often unsteady datasets beyond a single-GPU budget; we leave it as genuine future work rather
than a near-term claim.
\item \textbf{Conformal coverage out-of-distribution is empirical, not guaranteed.} The
distribution-free guarantee assumes exchangeability, which fails under shift; OOD coverage
holds in our experiments only because the uncertainty inflates appropriately
(\autoref{sec:conformal}). Scores are also pooled over spatially-correlated cells, so the
effective sample size is below the raw cell count, and the calibration set is small
($n=100$); a larger-calibration-set robustness check is future work.
\item \textbf{Residual-as-objective remains a dead end (the gate does not).} Used as a
correction \emph{objective} the residual fails: more iterations raise the PDE residual while
error falls (\autoref{sec:iters}). The DEQ contraction guarantee is in the correction
variable $\delta$, not in the output's PDE residual. The \emph{acceptance gate} built on the
residual is a different mechanism and we no longer claim it is vacuous: measured, it admits
a step on $99.8\%$ of deployed cases and that step lowers true error on $89.3\%$ of them
(\autoref{sec:iters}). That measurement is a single gated step on the deployed corrector,
not the multi-iteration feed-forward loop, which remains untested at this granularity. The correction gains in \autoref{sec:v2} come from
the \emph{learned, supervised} correction, not from minimising the residual---and, by the W1
control, not from feeding the corrector the residual either.
\item \textbf{The corrector's gain is not attributable to the residual input.} The deployed
DEQ corrector ingests the residual as an input channel, but a controlled ablation that zeros
that input at train and inference (5 seeds) matches or slightly exceeds it ($\approx 4\%$ on a
$\approx 0.12$ base; WITH beats NULL on $0/5$ seeds; \texttt{results/control/w1\_capture.json}).
We therefore attribute the $8$--$25\%$ field-MSE improvement to the learned correction
architecture, not to residual-conditioning. The residual's demonstrated value is as a trust
signal (\emph{where} the prediction is wrong), not as a correction input (\emph{how} to fix
it); the deployed system retains the residual input but does not depend on it for the gain.
\item \textbf{Adaptive uncertainty on the SOTA model requires a deep ensemble.} On the
near-deterministic Transolver backbone, MC-dropout $\sigma$ at dropout $0.05$ is
near-degenerate, so although conformal coverage holds ($0.902$, target $0.90$), the
quantile $q$ is enormous ($\approx 10^9$) and ECE $\approx 0.31$: coverage there is
delivered by a near-constant band. A deep ensemble of $M{=}5$ members recovers a genuinely
input-adaptive band on this model ($q=2.35$, ECE $0.074$; \autoref{tab:uq}), at $5\times$
the training cost. That ensemble certificate comes from a single ensemble training run;
the calibration/test split is no longer the caveat (20 re-draws: coverage
$0.899{\pm}0.009$, \autoref{sec:conformal}), and neither is the choice of $M$: sweeping
\emph{every} member subset of size $M=2..5$ (26 subsets;
\texttt{scripts/run\_ensemble\_mstudy.py}, \texttt{results/uq\_ensemble/mstudy.json})
shows the \emph{decision} layer is insensitive to ensemble size---fused AUROC
$0.905$--$0.912$ and conformal coverage $0.905$--$0.915$ (pressure, $0.90$ target) at
every $M$, so even $M{=}2$ supports triage---while band \emph{adaptivity} is what $M$
buys: the pressure quantile inflates $2.35 \rightarrow 3.74 \rightarrow 9.98$ and ECE
degrades $0.074 \rightarrow 0.113 \rightarrow 0.199$ as $M$ drops $5 \rightarrow 3
\rightarrow 2$ (coverage holds by widening, not by adapting), with $M{=}4$ already close
to $M{=}5$ ($q=2.75$, ECE $0.087$). Independently retrained ensembles (beyond member
subsets of one training run) remain future work.
\item \textbf{Eddy-viscosity accuracy is backbone-dependent.} Measured against $\nu_t$'s
own fluid-masked variance on the AirfRANS test split, the channel is well predicted on
the grid backbone ($R^2 = 0.996$) but only moderately on MeshGraphNet ($R^2 = 0.67$--$0.70$;
\texttt{scripts/check\_nut\_learned.py}). Because $\nu_t\approx5.2\,\nu$ on this data and supplies
$\approx84\%$ of $\nu_{\mathrm{eff}}$, a backbone with a weak $\nu_t$ carries that error straight into
the monitored residual---so the audit's own operator is only as turbulent as the closure
the surrogate predicts. Sharpening $\nu_t$ on the weaker backbones is future work.
\item \textbf{Classical fallback is a stub interface.} The trust-gated fallback fixes the
integration seam and reports what would run; the OpenFOAM/SU2 backends are not implemented
and bear on no quantitative claim.
\item \textbf{Sensitivity sweeps are light.} The iteration and toggle sweeps
(\autoref{sec:iters}) are single-checkpoint, unseeded sweeps; they illustrate the
residual--error divergence robustly in direction but are not multi-seed.
\end{itemize}

\section{Conclusion}

We set out to use the steady-RANS physics residual of a neural-CFD surrogate, and we found
that it plays \textbf{two distinct roles with two distinct verdicts}. \emph{As a trust
signal} it is a reliable, backbone-robust (three architecturally-distinct backbones) case-level detector of error---it tells you \emph{where} a prediction is wrong: $\rho=0.611{\pm}0.054$ (five seeds, positive on every seed) on a
SOTA Transolver backbone, $\approx 0.40$--$0.83$ on a weak grid backbone, $0.851{\pm}0.058$ on a
bare message-passing GNN (MeshGraphNet), and---under distribution
shift on the weak backbone---regime-invariant ($\approx 0.74$, rescuing an \texttt{aoa}
collapse from $0.31$ to $0.75$) where the one-shot signal collapses. \emph{As
a correction objective} it fails: more iterations raise the residual while error falls---minimising the residual is
not minimising the error. The monotone-residual \emph{gate} built on that same residual,
by contrast, we measure to be real and useful: it admits a step on $99.8\%$ of deployed
cases and that step lowers true error on $89.3\%$ of them by a median $5.8\%$. Alongside the trust signal we deploy a learned, supervised
correction that works on a strong backbone: the DEQ corrector cuts volume field MSE by
$8$--$25\%$ on all 5 seeds of the SOTA Transolver model (gate-verified), with force ranking
held; a controlled ablation attributes that gain to the learned correction rather than to the
residual input (the residual tells you \emph{where} to look, not \emph{how} to fix). On the strength of the detector result we package a backbone-robust
conformal trust layer that holds target coverage both in-distribution
($0.91/0.93/0.94$ at $0.90$, weak backbone) and on the SOTA backbone ($0.902$), and empirically
retains coverage out-of-distribution via uncertainty inflation; pairing it with a deep-ensemble
$\sigma$ makes the band input-adaptive even on the near-deterministic SOTA model ($q=2.35$,
ECE $0.074$), where MC-dropout alone gives a near-constant one. Two honest caveats
accompany the corrector: correction quality is backbone-dependent (flat on the weak backbone),
and a controlled ablation attributes its gain to the learned correction rather than to the
residual input it ingests. By the criteria of the self-auditing definition in the
introduction, the surrogate passes all three audits it can pass---a validated trust score
(i), a calibrated band that survives split-resampling (ii), and near-oracle triage
(iii)---while the residual-floor theorem fixes the audit's blind spot and the failed
\emph{objective} role marks its licence boundary: the audit tells you when \emph{not} to
trust a prediction, and a separately-supervised corrector, not residual minimisation, does
the fixing---though the cheap monotone-residual gate is a genuine safety net around it. That self-auditing, backbone-robust trust layer and the learned corrector it
accompanies---demonstrated on a competitive surrogate, reproducible end-to-end from the
committed harness---are the durable contribution.

\section*{Code and data availability}
The complete source, training/evaluation harnesses, and per-claim reproduction map
(\texttt{docs/REPRODUCE.md}) are released as the open-source package \texttt{neuroforge-cfd}
(\url{https://github.com/ali-kin4/neuroforge-cfd}; archived at Zenodo,
DOI \href{https://doi.org/10.5281/zenodo.21277928}{10.5281/zenodo.21277928}). Every headline number maps to a committed script and result file, and a top-level
manifest (\texttt{results/MANIFEST.json}) records seeds, environment, and SHA-256 hashes for
hash-drift detection. The two benchmarks are public: AirfRANS \citep{bonnet2022airfrans} and
DeepCFD \citep{ribeiro2020deepcfd}. The package is CPU-first and runs end-to-end with zero
downloads via a synthetic data generator; the GPU runs (backbone training) are documented with
exact commands and expected outputs.

\section*{CRediT author contribution statement}
\textbf{Ali Jabbary:} Conceptualization, Methodology, Software, Validation, Formal analysis,
Investigation, Data curation, Writing -- original draft, Writing -- review \& editing,
Visualization. \textbf{Kasra Ghanavati:} Validation, Writing -- review \& editing.

\section*{Declaration of competing interests}
The authors declare no competing financial or personal interests.

\section*{Declaration of generative AI and AI-assisted technologies in the manuscript
preparation process}
During the preparation of this work the authors used Claude (Anthropic) to assist in
drafting and editing the manuscript text. After using this tool, the authors reviewed and
edited the content as needed and take full responsibility for the content of the
published article. The use of AI assistance in developing the research software is
described in the Implementation section.

\section*{Funding}
This research received no specific grant from funding agencies in the public, commercial,
or not-for-profit sectors. All experiments reported here were run on a single consumer
workstation (one NVIDIA RTX 4070 Ti for backbone training; every audit-side measurement is
CPU-only), which is part of why the reproduction package is CPU-first.

\section*{Acknowledgements}
We thank the authors of AirfRANS \citep{bonnet2022airfrans} and DeepCFD
\citep{ribeiro2020deepcfd} for releasing their datasets, benchmark splits, and reference
force labels publicly; the cross-dataset and force-recovery results in this paper exist
only because those artefacts are open. We also thank the maintainers of the open-source
scientific Python and PyTorch ecosystems on which the released package depends. Assistance
from generative AI tools in preparing the manuscript text is declared separately above.

\bibliographystyle{tmlr}
\bibliography{refs}

\begin{thebibliography}{35}
\providecommand{\natexlab}[1]{#1}
\providecommand{\url}[1]{\texttt{#1}}
\expandafter\ifx\csname urlstyle\endcsname\relax
  \providecommand{\doi}[1]{doi: #1}\else
  \providecommand{\doi}{doi: \begingroup \urlstyle{rm}\Url}\fi

\bibitem[Ashton et~al.(2024)Ashton, Maddix, Gundry, and
  Shabestari]{ahmedml2024}
Neil Ashton, Danielle Maddix, Samuel Gundry, and Parisa Shabestari.
\newblock {AhmedML}: High-fidelity computational fluid dynamics dataset for
  incompressible, low-speed bluff body aerodynamics.
\newblock \emph{arXiv preprint arXiv:2407.20801}, 2024.

\bibitem[Bai et~al.(2019)Bai, Kolter, and Koltun]{bai2019deq}
Shaojie Bai, J.~Zico Kolter, and Vladlen Koltun.
\newblock Deep equilibrium models.
\newblock In \emph{Advances in Neural Information Processing Systems
  (NeurIPS)}, 2019.
\newblock arXiv:1909.01377.

\bibitem[Becker \& Rannacher(2001)Becker and Rannacher]{beckerrannacher2001dwr}
Roland Becker and Rolf Rannacher.
\newblock An optimal control approach to a posteriori error estimation in
  finite element methods.
\newblock \emph{Acta Numerica}, 10:\penalty0 1--102, 2001.
\newblock \doi{10.1017/S0962492901000010}.

\bibitem[Bonnet et~al.(2022)Bonnet, Mazari, Cinnella, and
  Gallinari]{bonnet2022airfrans}
Florent Bonnet, Jocelyn~Ahmed Mazari, Paola Cinnella, and Patrick Gallinari.
\newblock {AirfRANS}: High fidelity computational fluid dynamics dataset for
  approximating reynolds-averaged navier-stokes solutions.
\newblock In \emph{Advances in Neural Information Processing Systems (NeurIPS)
  Datasets and Benchmarks Track}, 2022.
\newblock arXiv:2212.07564.

\bibitem[Elrefaie et~al.(2024)Elrefaie, Morar, Dai, and
  Ahmed]{elrefaie2024drivaernet}
Mohamed Elrefaie, Florin Morar, Angela Dai, and Faez Ahmed.
\newblock {DrivAerNet++}: A large-scale multimodal car dataset with
  computational fluid dynamics simulations and deep learning benchmarks.
\newblock \emph{arXiv preprint arXiv:2406.09624}, 2024.

\bibitem[Fung et~al.(2022)Fung, Heaton, Li, McKenzie, Osher, and
  Yin]{fung2022jfb}
Samy~Wu Fung, Howard Heaton, Qiuwei Li, Daniel McKenzie, Stanley Osher, and
  Wotao Yin.
\newblock {JFB}: Jacobian-free backpropagation for implicit networks.
\newblock In \emph{AAAI Conference on Artificial Intelligence}, 2022.
\newblock arXiv:2103.12803.

\bibitem[Gal \& Ghahramani(2016)Gal and Ghahramani]{gal2016dropout}
Yarin Gal and Zoubin Ghahramani.
\newblock Dropout as a bayesian approximation: Representing model uncertainty
  in deep learning.
\newblock In \emph{International Conference on Machine Learning (ICML)}, 2016.
\newblock arXiv:1506.02142.

\bibitem[Garg \& Chakraborty(2025)Garg and Chakraborty]{garg2025dfuq}
Shailesh Garg and Souvik Chakraborty.
\newblock Distribution free uncertainty quantification for
  neuroscience-inspired deep neural operators.
\newblock \emph{Journal of Computational Physics}, 534:\penalty0 114012, 2025.
\newblock \doi{10.1016/j.jcp.2025.114012}.

\bibitem[Gopakumar et~al.(2025)Gopakumar, Gray, Zanisi, Nunn, Giles, Kusner,
  Pamela, and Deisenroth]{gopakumar2025pre}
Vignesh Gopakumar, Ander Gray, Lorenzo Zanisi, Timothy Nunn, Daniel Giles,
  Matt~J. Kusner, Stanislas Pamela, and Marc~Peter Deisenroth.
\newblock Calibrated physics-informed uncertainty quantification.
\newblock \emph{arXiv preprint arXiv:2502.04406}, 2025.

\bibitem[Hillebrecht \& Unger(2025)Hillebrecht and
  Unger]{hillebrecht2025posteriori}
Birgit Hillebrecht and Benjamin Unger.
\newblock Rigorous a posteriori error bounds for {PDE}-defined {PINNs}.
\newblock \emph{IEEE Transactions on Neural Networks and Learning Systems},
  36\penalty0 (1):\penalty0 1583--1593, 2025.
\newblock \doi{10.1109/TNNLS.2023.3335837}.

\bibitem[Hsieh et~al.(2019)Hsieh, Zhao, Eismann, Mirabella, and
  Ermon]{hsieh2019neuralpde}
Jun-Ting Hsieh, Shengjia Zhao, Stephan Eismann, Lucia Mirabella, and Stefano
  Ermon.
\newblock Learning neural {PDE} solvers with convergence guarantees.
\newblock In \emph{International Conference on Learning Representations
  (ICLR)}, 2019.
\newblock arXiv:1906.01200.

\bibitem[Huang \& Perdikaris(2025)Huang and
  Perdikaris]{huang2025physicscorrect}
Xinquan Huang and Paris Perdikaris.
\newblock {PhysicsCorrect}: A training-free approach for stable neural {PDE}
  simulations.
\newblock \emph{arXiv preprint arXiv:2507.02227}, 2025.
\newblock AAAI 2026.

\bibitem[Jha(2024)]{learned2023residualcorrection}
Prashant~K. Jha.
\newblock Residual-based error corrector operator to enhance accuracy and
  reliability of neural operator surrogates of nonlinear variational
  boundary-value problems.
\newblock \emph{Computer Methods in Applied Mechanics and Engineering},
  419:\penalty0 116595, 2024.
\newblock \doi{10.1016/j.cma.2023.116595}.
\newblock arXiv:2306.12047.

\bibitem[Jia et~al.(2026)Jia, Xia, Vdovin, Jia, Sebben, and
  Yang]{jia2026multigranularity}
Chundong Jia, Chao Xia, Alexey Vdovin, Qing Jia, Simone Sebben, and Zhigang
  Yang.
\newblock Multi-granularity conformal prediction for reliable neural-operator
  automotive aerodynamic surrogates.
\newblock \emph{arXiv preprint arXiv:2607.17297}, 2026.

\bibitem[Krishnapriyan et~al.(2021)Krishnapriyan, Gholami, Zhe, Kirby, and
  Mahoney]{krishnapriyan2021failure}
Aditi~S. Krishnapriyan, Amir Gholami, Shandian Zhe, Robert~M. Kirby, and
  Michael~W. Mahoney.
\newblock Characterizing possible failure modes in physics-informed neural
  networks.
\newblock In \emph{Advances in Neural Information Processing Systems
  (NeurIPS)}, 2021.

\bibitem[Lakshminarayanan et~al.(2017)Lakshminarayanan, Pritzel, and
  Blundell]{lakshminarayanan2017ensembles}
Balaji Lakshminarayanan, Alexander Pritzel, and Charles Blundell.
\newblock Simple and scalable predictive uncertainty estimation using deep
  ensembles.
\newblock In \emph{Advances in Neural Information Processing Systems
  (NeurIPS)}, 2017.
\newblock arXiv:1612.01474.

\bibitem[Li et~al.(2021)Li, Kovachki, Azizzadenesheli, Liu, Bhattacharya,
  Stuart, and Anandkumar]{li2021fno}
Zongyi Li, Nikola Kovachki, Kamyar Azizzadenesheli, Burigede Liu, Kaushik
  Bhattacharya, Andrew Stuart, and Anima Anandkumar.
\newblock Fourier neural operator for parametric partial differential
  equations.
\newblock In \emph{International Conference on Learning Representations
  (ICLR)}, 2021.
\newblock arXiv:2010.08895.

\bibitem[Li et~al.(2022)Li, Huang, Liu, and Anandkumar]{li2022geofno}
Zongyi Li, Daniel~Z. Huang, Burigede Liu, and Anima Anandkumar.
\newblock Fourier neural operator with learned deformations for pdes on general
  geometries.
\newblock \emph{arXiv preprint arXiv:2207.05209}, 2022.

\bibitem[Lippe et~al.(2023)Lippe, Veeling, Perdikaris, Turner, and
  Brandstetter]{lippe2023pderefiner}
Phillip Lippe, Bastiaan~S. Veeling, Paris Perdikaris, Richard~E. Turner, and
  Johannes Brandstetter.
\newblock {PDE}-refiner: Achieving accurate long rollouts with neural {PDE}
  solvers.
\newblock In \emph{Advances in Neural Information Processing Systems
  (NeurIPS)}, 2023.
\newblock arXiv:2308.05732.

\bibitem[Luo et~al.(2025)Luo, Wu, Zhou, Xing, Di, Wang, and
  Long]{luo2025transolverpp}
Huakun Luo, Haixu Wu, Hang Zhou, Lanxiang Xing, Yichen Di, Jianmin Wang, and
  Mingsheng Long.
\newblock Transolver++: An accurate neural solver for {PDE}s on million-scale
  geometries.
\newblock \emph{arXiv preprint arXiv:2502.02414}, 2025.

\bibitem[Ma et~al.(2024)Ma, Azizzadenesheli, and Anandkumar]{ma2024uqno}
Ziqi Ma, Kamyar Azizzadenesheli, and Anima Anandkumar.
\newblock Calibrated uncertainty quantification for operator learning via
  conformal prediction.
\newblock \emph{Transactions on Machine Learning Research (TMLR)}, 2024.
\newblock arXiv:2402.01960.

\bibitem[Marwah et~al.(2023)Marwah, Pokle, Kolter, Lipton, Lu, and
  Risteski]{marwah2023fnodeq}
Tanya Marwah, Ashwini Pokle, J.~Zico Kolter, Zachary~C. Lipton, Jianfeng Lu,
  and Andrej Risteski.
\newblock Deep equilibrium based neural operators for steady-state {PDE}s.
\newblock In \emph{Advances in Neural Information Processing Systems
  (NeurIPS)}, 2023.
\newblock arXiv:2312.00234.

\bibitem[Mukherjee et~al.(2026)Mukherjee, Fitzsimmons, Del Rey~Fern{\'a}ndez,
  and Liu]{mukherjee2026certification}
Amartya Mukherjee, Maxwell Fitzsimmons, David~C. Del Rey~Fern{\'a}ndez, and Jun
  Liu.
\newblock Rigorous error certification for neural {PDE} solvers: From empirical
  residuals to solution guarantees.
\newblock \emph{arXiv preprint arXiv:2603.19165}, 2026.

\bibitem[Raissi et~al.(2019)Raissi, Perdikaris, and
  Karniadakis]{raissi2019pinn}
Maziar Raissi, Paris Perdikaris, and George~E. Karniadakis.
\newblock Physics-informed neural networks: A deep learning framework for
  solving forward and inverse problems involving nonlinear partial differential
  equations.
\newblock \emph{Journal of Computational Physics}, 378:\penalty0 686--707,
  2019.

\bibitem[Ranade et~al.(2025)Ranade, Nabian, Tangsali, Kamenev, Hennigh,
  Cherukuri, and Choudhry]{ranade2025domino}
Rishikesh Ranade, Mohammad~Amin Nabian, Kaustubh Tangsali, Alexey Kamenev,
  Oliver Hennigh, Ram Cherukuri, and Sanjay Choudhry.
\newblock {DoMINO}: A decomposable multi-scale iterative neural operator for
  modeling large scale engineering simulations.
\newblock \emph{arXiv preprint arXiv:2501.13350}, 2025.
\newblock NVIDIA.

\bibitem[Ribeiro et~al.(2020)Ribeiro, Rehman, Ahmed, and
  Dengel]{ribeiro2020deepcfd}
Mateus~Dias Ribeiro, Abdul Rehman, Sheraz Ahmed, and Andreas Dengel.
\newblock {DeepCFD}: Efficient steady-state laminar flow approximation with
  deep convolutional neural networks.
\newblock \emph{arXiv preprint arXiv:2004.08826}, 2020.

\bibitem[Rigotti et~al.(2026)Rigotti, Thumiger, and Frick]{rigotti2026gist}
Mattia Rigotti, Nicholas Thumiger, and Thomas Frick.
\newblock {GIST}: Gauge-invariant spectral transformers for scalable graph
  neural operators.
\newblock \emph{arXiv preprint arXiv:2603.16849}, 2026.

\bibitem[Roy et~al.(2025)Roy, Nayak, and Goswami]{roy2025anchor}
Rajyasri Roy, Dibyajyoti Nayak, and Somdatta Goswami.
\newblock {ANCHOR}: Error-controlled adaptive numerical correction for neural
  operator time marching.
\newblock \emph{arXiv preprint arXiv:2512.19643}, 2025.

\bibitem[Scherz et~al.(2026)Scherz, Hines, and Bekemeyer]{scherz2026evaluation}
Jan Scherz, Derrick Hines, and Philipp Bekemeyer.
\newblock Evaluation of state-of-the-art deep learning architectures for
  aerodynamical predictions.
\newblock \emph{arXiv preprint arXiv:2607.13866}, 2026.

\bibitem[Song et~al.(2026)Song, Li, Deng, Li, Pan, Lai, and
  Wang]{song2026structureaware}
Haoze Song, Zhihao Li, Mengyi Deng, Xin Li, Duyi Pan, Zhilu Lai, and Wei Wang.
\newblock Structure-aware epistemic uncertainty quantification for neural
  operator {PDE} surrogates.
\newblock \emph{arXiv preprint arXiv:2603.11052}, 2026.

\bibitem[Trefethen \& Bau(1997)Trefethen and Bau]{trefethenbau1997}
Lloyd~N. Trefethen and David Bau.
\newblock \emph{Numerical Linear Algebra}.
\newblock SIAM, 1997.

\bibitem[Wang et~al.(2022)Wang, Yu, and Perdikaris]{wang2022ntk}
Sifan Wang, Xinling Yu, and Paris Perdikaris.
\newblock When and why {PINNs} fail to train: A neural tangent kernel
  perspective.
\newblock \emph{Journal of Computational Physics}, 449:\penalty0 110768, 2022.

\bibitem[Winston \& Kolter(2020)Winston and Kolter]{winston2020monotone}
Ezra Winston and J.~Zico Kolter.
\newblock Monotone operator equilibrium networks.
\newblock In \emph{Advances in Neural Information Processing Systems
  (NeurIPS)}, 2020.
\newblock arXiv:2006.08591.

\bibitem[Wu et~al.(2024)Wu, Luo, Wang, Wang, and Long]{wu2024transolver}
Haixu Wu, Huakun Luo, Haowen Wang, Jianmin Wang, and Mingsheng Long.
\newblock Transolver: A fast transformer solver for {PDE}s on general
  geometries.
\newblock In \emph{International Conference on Machine Learning (ICML)}, 2024.
\newblock arXiv:2402.02366.

\bibitem[Yu et~al.(2026)Yu, Ho, and Wang]{yu2026conformalpinn}
Yifan Yu, Cheuk~Hin Ho, and Yangshuai Wang.
\newblock A conformal prediction framework for uncertainty quantification in
  physics-informed neural networks.
\newblock \emph{Journal of Computational Physics}, 561:\penalty0 114979, 2026.
\newblock \doi{10.1016/j.jcp.2026.114979}.

\end{thebibliography}

\end{document}